\newenvironment{proof}{\textbf{Proof:}}{\hfill$\square$}
\newtheorem{theo}{Theorem}[section]
\newtheorem{coll}[theo]{Corollary}
\newtheorem{lem}[theo]{Lemma}
\newtheorem{prop}[theo]{Proposition}
\newtheorem{defn}[theo]{Definition}
\newtheorem{ex}[theo]{Example}
\newtheorem{rem}[theo]{Remark}
\begin{document}

\title[Double skew cyclic codes over $\mathbb{F}_q+v\mathbb{F}_q$]{Double skew cyclic codes over $\mathbb{F}_q+v\mathbb{F}_q$}

\author*[1]{\fnm{Ashutosh} \sur{Singh}}\email{ashutosh\_1921ma05@iitp.ac.in}

\author[2]{\fnm{Tulay} \sur{Yildirim}}\email{tulayturan@karabuk.edu.tr}

\author[1]{\fnm{Om} \sur{Prakash}}\email{om@iitp.ac.in}

\affil*[1]{\orgdiv{Department of Mathematics}, \orgname{Indian Institute of Technology Patna}, \orgaddress{\city{Patna}, \postcode{801 106}, \country{India}}}

\affil[2]{\orgdiv{Eskipazar Vocational School}, \orgname{Karab$\ddot{u}$k University}, \orgaddress{\street{Street}, \city{Karab$\ddot{u}$k}, \country{Turkey}}}


\abstract{In this study, in order to get better codes, we focus on double skew cyclic codes over the ring $\mathrm{R}= \mathbb{F}_q+v\mathbb{F}_q, ~v^2=v$ where $q$ is a prime power. We investigate the generator polynomials, minimal spanning sets, generator matrices, and the dual codes over the ring $\mathrm{R}$. As an implementation, the obtained results are illustrated with some good examples. Moreover, we introduce a construction for new generator matrices and thus achieve codes with improved parameters compared to those found in existing literature. Finally, we tabulate our obtained block codes over the ring $\mathrm{R}$.}

\keywords{Skew codes, Cyclic codes, Double skew cyclic codes, Optimal codes.}



\maketitle

\section{Introduction}

Codes over finite rings were introduced in the early seventies. The class of cyclic codes over rings is one of the most important classes of linear codes that have attracted much attention due to their rich algebraic properties and successful applications in combinatorial coding. Many authors have studied these codes over fields, finite chain rings and other algebraic structures for different contexts \cite{Alahmadi,A10, Dougherty, A11,  A12,  A13, Patel}. In 2007, Boucher et al. \cite{A9} initiated the study of cyclic codes using a noncommutative ring $\mathbb{F}[x;\theta]$ where $\theta$ is an automorphism of the finite field $\mathbb{F}$, and the authors have produced many numerical examples which improved the best-known codes. Since the skew polynomial ring is a non-UFD (unique factorization domain), the polynomial $x^n-1$ has the advantage of possessing additional factors in the respective skew polynomial ring.

On the other hand, the structure of double cyclic codes over finite rings has been studied by many mathematicians and scientists \cite{A7,A6,A17,A15}. The study of double cyclic codes was initiated by Ayats et al. \cite{Ayats} in 2014. They defined $\mathbb{Z}_2$-double cyclic codes as a subfamily of binary linear codes and investigated the algebraic structure of $\mathbb{Z}_2$-double cyclic codes and their dual codes. In 2016, Gao et al. \cite{A16} characterized the double cyclic codes over $\mathbb{Z}_4$, and they obtained some optimal codes. Later, Bathala et al. \cite{A14} determined the generating polynomials of $\mathbb{F}_4[v]$-double cyclic codes and their duals. Also, they investigated some algebraic properties of double constacyclic codes over $\mathbb{F}_4[v]$. In 2020, Deng et al. \cite{A6} studied double cyclic codes over the ring $\mathbb{F}_q+v\mathbb{F}_q$ and obtained some better codes. Recently, Aydogdu et al. \cite{A7} characterized the algebraic structure of double skew cyclic codes over the finite field $\mathbb{F}_q$. They also illustrated that MDS codes can be obtained with double skew cyclic codes over $\mathbb{F}_q$.
The above study motivates us to delve into the double skew cyclic codes over the ring $\mathrm{R}:=\mathbb{F}_q+v\mathbb{F}_q$, $v^2=v$, which is a finite non-chain ring. Here, we use the most fundamental ideas and theories by overcoming some computation difficulties to get better codes over the considered ring. For example, unlike the canonical maps used in the literature, we defined the Gray map over $\mathrm{R}$ by using invertible matrices over $\mathbb{F}_q$. Because we observe that the distance of a code over the ring $\mathrm{R}$ gives better results using matrix Gray maps instead of canonical Gray maps. We also generate new codes by constructing new generator matrices over the ring $\mathrm{R}$, which allowed us to obtain new codes with better parameters than existing parameters in the literature.

This paper is organized as follows: Section 2 contains some basic definitions and theories, and we also recall some structural properties of skew cyclic codes over $\mathrm{R}$. In Section 3, we present the generator polynomials of $\mathrm{R}$-double skew cyclic codes and studied minimal generating sets and generating matrices over $\mathrm{R}$. Section 4 describes the generating polynomial and the relationship between the codes and their dual. Finally, Section 5 provides some essential examples of optimal parameters double skew cyclic codes of block length over the ring $\mathrm{R}$. Here, we emphasize this work by introducing a new generator matrix to obtain new codes with better parameters.
\section{Preliminaries and Definitions}
Let $\mathbb{F}_q$ be the finite field with $q$ elements and $\mathrm{R}:=\mathbb{F}_q+v\mathbb{F}_q=\lbrace   a+vb \vert  a,  b\in \mathbb{F}_q   \rbrace$  with $v^2=v$. Thus, the ring $\mathrm{R}$ is a non-chain and semi-local ring with $\langle v\rangle$  and $\langle 1-v\rangle$ as maximal ideals. By the Chinese Remainder Theorem, an element of $\mathrm{R}$ can uniquely be expressed as $a+vb=(a+b)v+a(1-v)$ for all $a,  b\in \mathbb{F}_q$.\\

Recall that a code $C$ of length $n$ over a ring $\mathrm{R}$ is a non-empty subset of $\mathrm{R}^n$. A  linear code $C$ of length $n$ over $\mathrm{R}$ is an $\mathrm{R}$-submodule of $\mathrm{R}^n$.  The dual code of $C$,  denoted by $C^\bot$,  is also an $\mathrm{R}$-linear code and defined as $C^{\bot}= \lbrace y\in \mathrm{R}^n : <x,  y>=0,   \forall x\in C \rbrace$ where $<x,  y>$ is  an inner product of $x$ and $y$ in $\mathrm{R}^n$.\\

The Frobenius automorphism of $\mathbb{F}_q$ is the function $\zeta:\mathbb{F}_q \rightarrow \mathbb{F}_q$ defined by $\zeta(a)=a^p$ where $p$ is a prime. We consider the automorphism $\theta$ of $\mathrm{R}$, given in \cite{A2}, defined by
\begin{align*}
a+vb &\mapsto a^{p^i}+vb^{p^i}.
\end{align*}
Obviously, for $i$=1, $\theta_i$ is the Frobenius automorphism of $\mathbb{F}_q$. The set $\mathrm{R}[x; \theta_i] =\lbrace  \sum_{i=0}^nr_ix^i \vert r_i\in \mathrm{R} \rbrace$ of polynomials over $\mathbb{F}_q$ forms a ring under the usual addition of polynomials, and multiplication defined with respect to the rule $xr=\theta_i(r)x$ for all $r\in \mathrm{R}$. Clearly,  $\mathrm{R}[x; \theta_i]$ is a noncommutative ring unless $\theta_i$ is the identity automorphism. This ring is known as the skew polynomial ring over $\mathrm{R}$, and its elements are called skew polynomials.
\begin{defn}
Let $\vartheta$ be a map given by
\begin{align*}
\vartheta\colon \mathrm{R}[x; \theta_i] &\to \mathrm{R}[x; \theta_i] \\
\sum_{i=0}^nr_ix^i &\mapsto \sum_{i=0}^n\theta_i(r_i)x^i,
\end{align*}
where $r_i\in \mathrm{R}$. Then, $\vartheta$ is a ring homomorphism.
\end{defn}
\begin{defn}
Let $\theta_i$ be  the automorphism  of $\mathrm{R}$ and $ax^m,   bx^n \in \mathrm{R}[x;  \theta_i]$. Then the multiplication rule of $\mathrm{R}[x;\theta_i]$ skew polynomial ring over $\mathrm{R}$ is
\begin{align*}
(ax^m)\cdot(bx^n)=a\theta_i^m(b)x^{m+n}.
\end{align*}
\end{defn}

A subset $C$ of $\mathrm{R}^n$ is said to be an $\mathrm{R}$-linear skew cyclic code of length $n$ if $C$ is an $\mathrm{R}$-submodule of $\mathrm{R}^n$, and $C$ is closed under the ${\theta_i}$-cyclic shift, i.e.  for $c=(c_0, c_1, \ldots , c_{n-1})\in C$,
${\theta_i}(c)=(\theta_i(c_{n-1}), \theta_i(c_0), \ldots , \theta_i(c_{n-2})) \in C$. Note that each codeword $c$ of $C$ can be defined with its polynomial representation $c(x)=c_0+c_1x+ \ldots + c_{n-1}x^{n-1}$.

For convenience, we set $v^\prime:=1-v$ and also define a polynomial $f(x)\in \mathrm{R}[x;\theta_i]$ as $f(x):=f_v(x)v+f_{v^\prime}(x)v^\prime$ where $f_v(x),  f_{v^\prime}(x) \in \mathbb{F}_q[x;\theta_i]$. \\

 \begin{lem}\label{D1}  \cite{A4}
 Let $f(x),  g(x)\in \mathrm{R}[x; \theta_i]$ with $f(x)\neq 0$. Then
  there exists $q(x),  r(x )\in \mathrm{R}[x; \theta_i]$ such that $g(x)=q(x)f(x)+r(x)$  where $r(x)=0$ or $deg(r(x))< deg(f(x))$.  Moreover, $ \mathrm{R}[x; \theta_i]$ is  a principal ideal domain.
 \end{lem}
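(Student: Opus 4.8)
The plan is to reduce everything to the classical theory of skew polynomials over a field by exploiting the Chinese Remainder decomposition already recorded in the preliminaries. Since $\theta_i$ fixes both $v$ and $v^\prime = 1-v$ (indeed $\theta_i(v) = 0^{p^i} + v\cdot 1^{p^i} = v$), these are central orthogonal idempotents of $\mathrm{R}[x;\theta_i]$, and the assignment $h(x) \mapsto (h_v(x), h_{v^\prime}(x))$ furnishes a ring isomorphism $\mathrm{R}[x;\theta_i] \cong \mathbb{F}_q[x;\theta_i] \times \mathbb{F}_q[x;\theta_i]$, where $h(x) = h_v(x)v + h_{v^\prime}(x)v^\prime$. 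Because $v$ and $v^\prime$ are linearly independent over $\mathbb{F}_q$, the coefficient of $x^k$ in $h(x)$ vanishes exactly when both field-coefficients vanish, so $\deg h = \max(\deg h_v, \deg h_{v^\prime})$; this is the degree function I will track throughout, and the whole argument then splits into a component-wise computation in each copy of $\mathbb{F}_q[x;\theta_i]$.

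For the division step I would first establish right division over the field. Writing $f(x) = f_v(x)v + f_{v^\prime}(x)v^\prime$ and $g(x) = g_v(x)v + g_{v^\prime}(x)v^\prime$, I apply the standard leading-term-cancellation induction in each component: since $\mathbb{F}_q$ is a field, the leading coefficient of any nonzero polynomial is a unit, so one subtracts a suitable left monomial multiple $c\,x^{m-n}f_v$ with $c = (\mathrm{lead}\,g_v)\,\theta_i^{m-n}(\mathrm{lead}\,f_v)^{-1}$ to strictly lower the degree, then iterates. This yields $g_v = q_v f_v + r_v$ and $g_{v^\prime} = q_{v^\prime} f_{v^\prime} + r_{v^\prime}$ with each remainder either zero or of degree strictly below that of the corresponding divisor. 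Recombining $q(x) = q_v v + q_{v^\prime} v^\prime$ and $r(x) = r_v v + r_{v^\prime} v^\prime$ and using centrality and orthogonality of $v, v^\prime$ gives $q(x)f(x) + r(x) = g(x)$, while $\deg r = \max(\deg r_v, \deg r_{v^\prime}) < \max(\deg f_v, \deg f_{v^\prime}) = \deg f$.

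For the ideal statement I would use that a left ideal $J$ of $\mathrm{R}[x;\theta_i]$ satisfies $J = vJ \oplus v^\prime J$, and that under the isomorphism $vJ$ and $v^\prime J$ correspond to left ideals $J_v, J_{v^\prime}$ of $S := \mathbb{F}_q[x;\theta_i]$. Each of these is principal, say $J_v = S\,a_v$ and $J_{v^\prime} = S\,a_{v^\prime}$, because $S$ over a field is a principal left (and right) ideal domain by the division just established. Then $J$ is generated by the single element $a_v v + a_{v^\prime} v^\prime$, so every one-sided ideal of $\mathrm{R}[x;\theta_i]$ is principal; the symmetric argument with right division handles right ideals.

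The main obstacle, and the point demanding care, is the case where one CRT component of $f$ vanishes, say $f_v = 0$ while $f_{v^\prime} \neq 0$. Then $f(x)$ is a genuine zero divisor (it is annihilated by $v$), so $\mathrm{R}[x;\theta_i]$ is not literally an integral domain and the assertion is best read as \emph{principal ideal ring}; moreover the honest degree function gives $\deg f = \deg f_{v^\prime}$, yet the $v$-part of every product $q(x)f(x)$ is zero, forcing $r_v = g_v$, whose degree cannot be controlled. Thus the clean division with $\deg r < \deg f$ holds precisely when both $f_v$ and $f_{v^\prime}$ are nonzero, and a direct right-division carried out inside $\mathrm{R}[x;\theta_i]$ would stall here because the leading coefficient of $f$ need not be a unit of $\mathrm{R}$. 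I would therefore state the division hypothesis as $f_v \neq 0$ and $f_{v^\prime} \neq 0$ and let the CRT reduction do the rest; this is the step I expect to absorb essentially all of the difficulty.
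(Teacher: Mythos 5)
The paper does not actually prove this lemma: it is imported wholesale from the cited reference (McDonald's book on finite rings), so there is no in-paper argument to compare yours against. Judged on its own terms, your CRT argument is sound and is the natural way to establish the result for this particular ring: $\theta_i$ does fix $v$ and $v'=1-v$, these are central orthogonal idempotents of $\mathrm{R}[x;\theta_i]$, the splitting $\mathrm{R}[x;\theta_i]\cong \mathbb{F}_q[x;\theta_i]\times\mathbb{F}_q[x;\theta_i]$ is a ring isomorphism compatible with $\deg h=\max(\deg h_v,\deg h_{v'})$, the componentwise leading-term cancellation over the field is the standard Ore division, and the recombination $q=q_vv+q_{v'}v'$, $r=r_vv+r_{v'}v'$ together with $\deg r=\max(\deg r_v,\deg r_{v'})<\max(\deg f_v,\deg f_{v'})=\deg f$ is correct. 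The ideal-theoretic half ($J=vJ\oplus v'J$, each piece principal over $\mathbb{F}_q[x;\theta_i]$, generator $a_vv+a_{v'}v'$) is likewise fine.

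Your closing caveat is not a defect of your proof but a genuine and worthwhile observation about the lemma as stated. Since $v\cdot v'=0$, the ring $\mathrm{R}[x;\theta_i]$ is not a domain, so ``principal ideal domain'' should read ``principal ideal ring''; and the division statement fails under the bare hypothesis $f\neq 0$ (take $f=v$, $g=v'x$: every $q(x)f(x)$ has vanishing $v'$-part, forcing $r=v'x$ of degree $1>\deg f=0$). The correct hypothesis is exactly the one you isolate, $f_v\neq 0$ and $f_{v'}\neq 0$ --- which, as you note, is strictly weaker than requiring the leading coefficient of $f$ to be a unit, since the two components may have different degrees. This matches how the lemma is actually used later in the paper (the generators $g,l,h$ are taken with both CRT components nonzero and dividing $x^r-1$ or $x^s-1$), so your amended statement is the one the authors need. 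The only thing I would add is a one-line remark that uniqueness of $(q,r)$ also follows componentwise, since the paper implicitly relies on well-defined quotients such as $\frac{x^s-1}{h_v(x)}$.
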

 The below lemmas can be proved with slight modifications to the associated theories in \cite{A6}.
 \begin{lem}
Let $f(x)$ be a function in $\mathrm{R}[x;\theta_i]$. Then, $f(x)$ is called right divisor of $g(x)$ if there exists a skew polynomial $h(x)$ such that $g_v(x)=h_v(x)f_v(x)$ and $g_{v^\prime}(x)=h_{v^\prime}(x)f_{v^\prime}(x)$, denoted by $f(x)\vert_r g(x)$.
  \end{lem}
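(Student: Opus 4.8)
The plan is to show that the componentwise condition recorded in the statement is equivalent to the usual notion of right divisibility in the skew polynomial ring, namely the existence of $h(x)\in\mathrm{R}[x;\theta_i]$ with $g(x)=h(x)f(x)$, by exploiting the Chinese Remainder Theorem decomposition attached to the orthogonal idempotents $v$ and $v^\prime=1-v$. The key preliminary observation is that $\theta_i$ fixes both idempotents: from the defining rule $\theta_i(a+vb)=a^{p^i}+vb^{p^i}$ one reads off $\theta_i(v)=v$ and $\theta_i(v^\prime)=v^\prime$. Consequently $xv=\theta_i(v)x=vx$ and $xv^\prime=v^\prime x$, so $v$ and $v^\prime$ are \emph{central} in $\mathrm{R}[x;\theta_i]$. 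This is the structural fact that lets the noncommutative multiplication respect the idempotent splitting.

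First I would record the resulting decomposition. Since $v^2=v$, $(v^\prime)^2=v^\prime$, $vv^\prime=0$ and $v+v^\prime=1$, writing each skew polynomial as $f(x)=f_v(x)v+f_{v^\prime}(x)v^\prime$ with $f_v,f_{v^\prime}\in\mathbb{F}_q[x;\theta_i]$ yields a ring isomorphism $\mathrm{R}[x;\theta_i]\cong\mathbb{F}_q[x;\theta_i]\,v\times\mathbb{F}_q[x;\theta_i]\,v^\prime$, and this representation is unique.

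For the forward direction, suppose $g(x)=h(x)f(x)$ with $h(x)=h_v(x)v+h_{v^\prime}(x)v^\prime$. Using centrality of the idempotents together with their orthogonality, the product expands as
\begin{align*}
h(x)f(x)&=\bigl(h_v(x)v+h_{v^\prime}(x)v^\prime\bigr)\bigl(f_v(x)v+f_{v^\prime}(x)v^\prime\bigr)\\
&=h_v(x)f_v(x)\,v+h_{v^\prime}(x)f_{v^\prime}(x)\,v^\prime,
\end{align*}
the cross terms vanishing because $vv^\prime=v^\prime v=0$. Comparing with $g(x)=g_v(x)v+g_{v^\prime}(x)v^\prime$ and invoking uniqueness of the decomposition gives $g_v(x)=h_v(x)f_v(x)$ and $g_{v^\prime}(x)=h_{v^\prime}(x)f_{v^\prime}(x)$, which are exactly the asserted conditions. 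Conversely, if these two identities hold over $\mathbb{F}_q[x;\theta_i]$, I would set $h(x):=h_v(x)v+h_{v^\prime}(x)v^\prime$ and run the same computation in reverse to recover $g(x)=h(x)f(x)$, establishing $f(x)\vert_r g(x)$.

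The step I expect to be the main obstacle is the centrality check at the very beginning: the whole argument collapses the noncommutativity into two independent copies of the ordinary divisibility problem over $\mathbb{F}_q[x;\theta_i]$, but this collapse works only because $\theta_i$ fixes $v$ and $v^\prime$. Once that is in hand, everything reduces to the orthogonal-idempotent bookkeeping above, which is routine; the remaining care is merely to keep the skew multiplication rule $xr=\theta_i(r)x$ straight inside each $\mathbb{F}_q[x;\theta_i]$ factor, where it is inherited unchanged.
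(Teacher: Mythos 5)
Your argument is correct and is essentially the intended one: the paper gives no explicit proof of this statement (it is really a definition of $\vert_r$ phrased as a lemma, with a blanket remark deferring to the commutative case in the cited work on double cyclic codes over $\mathbb{F}_q+v\mathbb{F}_q$), and what you supply is precisely the "slight modification" that reference requires in the skew setting. You correctly isolate the one point that needs checking beyond the commutative case, namely that $\theta_i(v)=v$ and $\theta_i(v^\prime)=v^\prime$, so the idempotents remain central in $\mathrm{R}[x;\theta_i]$ and the ring splits as a direct product of two copies of $\mathbb{F}_q[x;\theta_i]$; after that the orthogonal-idempotent expansion showing $h(x)f(x)=h_v(x)f_v(x)v+h_{v^\prime}(x)f_{v^\prime}(x)v^\prime$ is exactly the computation the paper implicitly relies on throughout (e.g.\ in Theorem \ref{D3} and Proposition \ref{D4}).
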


\begin{lem}
  Let $f(x)$ and $g(x)$ be skew polynomials in $\mathrm{R}[x;\theta_i]$. The greatest common divisor of $f(x)$ and $g(x)$ is the monic polynomial $d(x)\in \mathrm{R}[x;\theta_i]$ such that $d(x)=gcd(f_v(x), g_v(x))+gcd(f_{v^\prime}(x), g_{v^\prime}(x))$.

\end{lem}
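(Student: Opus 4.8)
The plan is to exploit the Chinese Remainder Theorem decomposition already recorded in the preliminaries, namely that every element of $\mathrm{R}$ splits along the orthogonal idempotents $v$ and $v^\prime=1-v$. The single fact that makes everything work at the level of the skew polynomial ring is that $\theta_i$ fixes these idempotents: since $\theta_i(a+vb)=a^{p^i}+vb^{p^i}$, we get $\theta_i(v)=v$ and $\theta_i(v^\prime)=v^\prime$. Consequently $xv=\theta_i(v)x=vx$ and $xv^\prime=v^\prime x$, so $v$ and $v^\prime$ are \emph{central} idempotents in $\mathrm{R}[x;\theta_i]$. First I would use this to establish the ring isomorphism $\mathrm{R}[x;\theta_i]\cong \mathbb{F}_q[x;\theta_i]\times\mathbb{F}_q[x;\theta_i]$ given by $f(x)\mapsto(f_v(x),f_{v^\prime}(x))$, under which multiplication is componentwise: the orthogonality $vv^\prime=0$ together with $v^2=v$, $(v^\prime)^2=v^\prime$ and the centrality yields $(h_vv+h_{v^\prime}v^\prime)(d_vv+d_{v^\prime}v^\prime)=(h_vd_v)v+(h_{v^\prime}d_{v^\prime})v^\prime$.

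Having this, the preceding lemma's componentwise definition of right divisibility becomes transparent: $d(x)\mid_r f(x)$ if and only if $d_v\mid_r f_v$ and $d_{v^\prime}\mid_r f_{v^\prime}$ in $\mathbb{F}_q[x;\theta_i]$. Next I would set $d_v=\gcd(f_v,g_v)$ and $d_{v^\prime}=\gcd(f_{v^\prime},g_{v^\prime})$, which exist because the skew polynomial ring over the field $\mathbb{F}_q$ is a right Euclidean domain via the division algorithm of Lemma \ref{D1}, and define $d(x)=d_vv+d_{v^\prime}v^\prime$, normalized to be monic; this is exactly the Chinese Remainder recombination producing the claimed $d(x)$. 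The verification that $d(x)$ is a common right divisor is then immediate from the equivalence above, since each of $d_v,d_{v^\prime}$ divides the corresponding components of $f$ and $g$.

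It remains to show maximality, i.e. that every common right divisor divides $d(x)$. Let $e(x)=e_vv+e_{v^\prime}v^\prime$ be any common right divisor of $f(x)$ and $g(x)$. By the componentwise characterization, $e_v\mid_r f_v$ and $e_v\mid_r g_v$, so the universal property of the greatest common right divisor in $\mathbb{F}_q[x;\theta_i]$ forces $e_v\mid_r d_v$; symmetrically $e_{v^\prime}\mid_r d_{v^\prime}$. Recombining through the isomorphism gives $e(x)\mid_r d(x)$, which is precisely what is required. Uniqueness of $d(x)$ follows because the componentwise gcds are unique up to a left unit and the monic normalization removes this ambiguity.

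I expect the only genuine obstacle to be the first step: one must argue carefully that $v,v^\prime$ are central in $\mathrm{R}[x;\theta_i]$ (this is where $\theta_i(v)=v$ is indispensable) and hence that the CRT splitting is a genuine \emph{ring} decomposition respecting the skew multiplication, so that divisibility and the Euclidean algorithm act independently in the two coordinates. Once centrality is secured, the statement reduces to a routine transfer of the gcd computation in $\mathbb{F}_q[x;\theta_i]$ through the isomorphism, following the commutative template of \cite{A6}.
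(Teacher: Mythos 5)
Your argument is correct. The paper gives no actual proof of this lemma; it only remarks that these lemmas ``can be proved with slight modifications to the associated theories in \cite{A6}'' (the commutative case), and your write-up supplies precisely the required modification: since $\theta_i(v)=v$ and $\theta_i(v^\prime)=v^\prime$, the orthogonal idempotents remain central in $\mathrm{R}[x;\theta_i]$, so the splitting $\mathrm{R}[x;\theta_i]\cong\mathbb{F}_q[x;\theta_i]\times\mathbb{F}_q[x;\theta_i]$ is a genuine ring isomorphism and right divisibility, hence the gcd, is computed componentwise in the right-Euclidean ring $\mathbb{F}_q[x;\theta_i]$. One minor remark: the lemma as printed reads $d(x)=gcd(f_v(x),g_v(x))+gcd(f_{v^\prime}(x),g_{v^\prime}(x))$, omitting the idempotent weights; your $d(x)=d_v(x)v+d_{v^\prime}(x)v^\prime$ is the intended (and correct) form, consistent with the decomposition $f(x)=f_v(x)v+f_{v^\prime}(x)v^\prime$ used throughout the paper.
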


\begin{lem}
   The least common multiple of skew polynomials $f(x)$ and $g(x)$ in $\mathrm{R}[x;\theta_i]$ is the unique monic polynomial $\gamma(x)$ such that $f(x)\vert_r \gamma(x)$ and $g(x)\vert_r \gamma(x)$ and for any $\gamma^\prime(x) \in \mathrm{R}[x;\theta_i]$ with $f(x)\vert_r \gamma^\prime(x)$ and $g(x)\vert_r \gamma^\prime(x)$, we have $\gamma(x)\vert_r \gamma^\prime(x)$. We denote $lcm(f(x),  g(x))$  by $\gamma(x)$.
\end{lem}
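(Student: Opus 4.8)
The plan is to reduce the whole statement to the ordinary skew polynomial ring over the field, exploiting that the lemma is phrased entirely through the $v$- and $v'$-components. Since $\theta_i$ fixes both $v$ and $v'=1-v$, writing $f(x)=vf_v(x)+v'f_{v'}(x)$ realises a ring isomorphism $\mathrm{R}[x;\theta_i]\cong\mathbb{F}_q[x;\theta_i]\times\mathbb{F}_q[x;\theta_i]$, $f(x)\mapsto(f_v(x),f_{v'}(x))$, that is compatible with the skew multiplication. The first thing I would record is that, under this identification and by the right-divisor characterisation already stated above, $f(x)\vert_r\gamma(x)$ holds exactly when $f_v(x)\vert_r\gamma_v(x)$ and $f_{v'}(x)\vert_r\gamma_{v'}(x)$ in $\mathbb{F}_q[x;\theta_i]$. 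Hence right divisibility, and with it the notion of a least common right multiple, decouples into the two coordinates, and it suffices to build the lcm coordinatewise and then glue.

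Next I would treat each coordinate. The ring $\mathbb{F}_q[x;\theta_i]$ is a skew polynomial ring over a field, so the field-coefficient version of Lemma \ref{D1} supplies a right division algorithm and makes it a (left and right) principal ideal domain. For $f_v,g_v$ this means the set of common right multiples, namely the intersection of left ideals $\mathbb{F}_q[x;\theta_i]f_v\cap\mathbb{F}_q[x;\theta_i]g_v$, is itself a principal left ideal; its unique monic generator I would take as $\gamma_v:=lcm(f_v,g_v)$, and it satisfies $f_v\vert_r\gamma_v$, $g_v\vert_r\gamma_v$ and right-divides every common right multiple. Running the same argument in the $v'$-coordinate produces $\gamma_{v'}:=lcm(f_{v'},g_{v'})$. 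I would then set $\gamma(x):=v\,\gamma_v(x)+v'\,\gamma_{v'}(x)$, which is monic because $v+v'=1$ and each coordinate is monic.

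It remains to verify the two defining properties and uniqueness. From the coordinatewise description of $\vert_r$, the relations $f_v\vert_r\gamma_v$ and $f_{v'}\vert_r\gamma_{v'}$ immediately give $f(x)\vert_r\gamma(x)$, and symmetrically $g(x)\vert_r\gamma(x)$. For the minimality clause I would take any common right multiple $\gamma'(x)=v\gamma'_v(x)+v'\gamma'_{v'}(x)$; since $f_v,g_v\vert_r\gamma'_v$, the coordinate lcm forces $\gamma_v\vert_r\gamma'_v$, and likewise $\gamma_{v'}\vert_r\gamma'_{v'}$, so reassembling yields $\gamma(x)\vert_r\gamma'(x)$. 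Uniqueness then follows in the standard way: two monic polynomials with this universal property right-divide each other, hence agree in each coordinate and coincide. The step I expect to be the main obstacle is the coordinate analysis---justifying that $\mathbb{F}_q[x;\theta_i]$ is a principal (left) ideal domain so that the intersection of left ideals is principal and the monic lcm is genuinely well defined; once that is in hand, the gluing through the orthogonal idempotents $v$ and $v'$ is routine.
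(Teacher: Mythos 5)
Your argument is correct and is essentially the route the paper intends: the paper gives no explicit proof of this lemma, only the remark that it follows by slight modifications of the corresponding results in \cite{A6}, and that modification is precisely your decomposition through the orthogonal idempotents $v$ and $v'=1-v$ into two copies of $\mathbb{F}_q[x;\theta_i]$, where the least common right multiple exists and is unique because the skew polynomial ring over a field is left Euclidean, hence a principal left ideal ring. The one point to watch is your claim that $\gamma(x)=v\,\gamma_v(x)+v'\,\gamma_{v'}(x)$ is monic ``because $v+v'=1$'': when $\deg\gamma_v\neq\deg\gamma_{v'}$ the leading coefficient is the idempotent $v$ or $v'$ rather than $1$, so monicity must be read, as the paper implicitly does throughout, as the requirement that both components are monic.
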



\begin{defn}  \cite{gao}
     Let $C$ be a linear code of length $n$ over $\mathrm{R}$. The Gray map $ \phi: \mathrm{R} \rightarrow \mathbb{F}_q^2$ is defined by
     $$
\begin{aligned}
& \phi(\mathbf{a}) = (a_0', a_1')N,
\end{aligned}
$$
where $\mathbf{a}=(a_0', a_1')$ with $\mathbf{a}=a_0+a_1 v=v a_0'+(1-v) a_1' \in \mathrm{R}$, for $a_0,  a_1,  a_0^\prime,  a_1^\prime \in \mathbb{F}_q$ and $ N \in G L_{2}\left(\mathbb{F}_q\right)$ with $NN^T=\eta I_2$ for $\eta \in \mathbb{F}_q-\{0\}$.
\end{defn}

 Clearly, $\phi$ is an $\mathbb{F}_q$-module isomorphism. For brevity, we write the vector $(a_0', a_1')N$ as $\mathbf{a} N$. This map naturally can be extended as


\begin{align*}
\phi \colon \mathrm{R}^{n} &\to \mathbb{F}_q^{2n} \\
(\mathbf{a}_0, \mathbf{a}_1, \ldots , \mathbf{a}_{n-1})  &\mapsto (\mathbf{a}_0 N, \mathbf{a}_1 N, \ldots, \mathbf{a}_{n-1} N).
\end{align*}
For any element $\mathbf{a} \in \mathrm{R}$, we denote the Hamming weight $w_H(\mathbf{a} N)$ of $\mathbf{a} N$ as the Gray weight of $\mathbf{a}$, i.e., $w_G(\mathbf{a})=w_H(\mathbf{a} N)$. The Gray weight of any element $(\mathbf{a}_0, \mathbf{a}_1, \ldots , \mathbf{a}_{n-1}) \in \mathrm{R}^n$ is equal to the integer $\sum_{i=0}^{n-1}w_G(\mathbf{a}_i)$.

\begin{rem}
    Throughout the paper, for the underlined fields $N$ is $\left[\begin{array}{cc}
     1&t  \\
     t&1
\end{array} \right]$ for  even characteristics or $\left[\begin{array}{cc}
     1&1  \\
     1&-1
\end{array} \right]$ for   odd characteristics, where $t$ is a primitive root of unity in relevant field.
\end{rem}

\section{Double skew cyclic codes over $\mathrm{R}$}

Let $r$ and $s$ be positive integers such that $n=r+s$, i.e., the $n$-coordinates of each $n$-tuple in $\mathrm{R}^n$ can be partitioned into two sets of sizes $r$ and $s$. Then, $\mathrm{R}^n$ can be defined as $\mathrm{R}$-submodule of $\mathrm{R}^r\times \mathrm{R}^s$.  Any linear code $C$ of length $n$ over $\mathrm{R}$ is an $\mathrm{R}$-submodule of $\mathrm{R}^r\times \mathrm{R}^s$. The set of all double skew cyclic codes of length $(r,  s)$ over $\mathrm{R}$ is denoted by $C_{r,  s}(\mathrm{R})$. Throughout the paper,  $C\in C_{r,  s}(\mathrm{R})$.\\

\begin{defn}
Let $\theta_i$ be an automorphism of $\mathrm{R}$ and $C,$ a skew cyclic code of length $(r,  s)$ over $\mathrm{R}$.  The code $C$ is called a double skew cyclic code of length $(r,  s)$ over $\mathrm{R}$ if $C$ is closed under the $T_{\theta_i}$-double cyclic shift,  i.e., for $c=(c_0,  c_1,  \ldots,  c_{r-1} \vert  c_0^\prime,  c_1^\prime,  \ldots,  c_{s-1}^\prime)\in C$,
\begin{align*}
T_{\theta_i}(c)=(\theta_i(c_{r-1}),  \theta_i(c_0),  \ldots,  \theta_i(c_{r-2})\vert \theta_i(c^{\prime}_{s-1}),  \theta_i(c^{\prime}_0),  \ldots,  \theta_i(c^{\prime}_{s-2}) )\in C.
\end{align*}
\end{defn}

 Each $(c_0,  c_1,  \ldots,  c_{r-1} \vert  c_0^\prime,  c_1^\prime,  \ldots,  c_{s-1}^\prime)\in \mathrm{R}^r\times \mathrm{R}^s$ can be identify with a pair of polynomials $(c(x)|c^\prime(x))=(c_0+ c_1x+ \ldots +  c_{r-1}x^{r-1} \vert  c_0^\prime+ c_1^\prime x \ldots + c_{s-1}^\prime x^{s-1})$ in $\mathcal{R}_{r,  s}=\mathcal{R}_r \times \mathcal{R}_s=\frac{\mathrm{R}[x; \theta_i]}{\langle x^r-1\rangle}\times \frac{\mathrm{R}[x; \theta_i]}{\langle x^s-1\rangle }$.   Clearly,  there is a one-to one correspondence between $\mathrm{R}^r\times \mathrm{R}^s$  and $\mathcal{R}_{r,  s}$.  Moreover,  for any $a\in \mathrm{R}$ and $(\textbf{m}|\textbf{n})\in \mathrm{R}^r\times \mathrm{R}^s$,  we define the product $a\cdot(\textbf{m}|\textbf{n})=(a\textbf{m}|a\textbf{n})$.  \\

The dual of $C$ is an $\mathrm{R}$-double skew cyclic code of length $(r, s)$ defined as $C^\bot=\lbrace x\in \mathrm{R}^r \times \mathrm{R}^s : <x,y> = 0,  \forall y\in C \rbrace$.\\

Let $f(x)\in \mathrm{R}[x; \theta_i]$,  and  $a_0,  a_1,  \ldots,  a_n$  coefficients of the polynomial $f(x)$. Then the polynomial $f^\star(x)=\sum_{j=0}^{s}\theta_i^j(a_{s-j})x^j$ is called the reciprocal polynomial of $f(x)$.\\

 \begin{theo}\label{D2}
A code $C$ is an $\mathrm{R}$-double skew cyclic code of length $(r,  s)$ if and only if $C$ is a left $\mathrm{R}[x; \theta_i]$-submodule of $\mathcal{R}_{r,  s}$.
\end{theo}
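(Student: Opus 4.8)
The plan is to establish the equivalence by identifying the double skew cyclic shift $T_{\theta_i}$ with left multiplication by $x$ in the quotient module $\mathcal{R}_{r,s}$, and then to exploit the fact that $x$ together with the scalars $\mathrm{R}$ generates the whole skew polynomial ring $\mathrm{R}[x;\theta_i]$.

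First I would make the correspondence explicit: to a codeword $c=(c_0,\ldots,c_{r-1}\mid c_0^\prime,\ldots,c_{s-1}^\prime)$ associate the pair $(c(x)\mid c^\prime(x))\in\mathcal{R}_{r,s}$ as in the identification preceding the theorem. The crucial computation is to show that $x\cdot(c(x)\mid c^\prime(x))=(x\cdot c(x)\mid x\cdot c^\prime(x))$ equals the polynomial representation of $T_{\theta_i}(c)$. Applying the multiplication rule $xr=\theta_i(r)x$ term by term gives $x\cdot c(x)=\theta_i(c_0)x+\theta_i(c_1)x^2+\cdots+\theta_i(c_{r-1})x^r$; reducing modulo $x^r-1$ replaces $\theta_i(c_{r-1})x^r$ by $\theta_i(c_{r-1})$, yielding exactly $\theta_i(c_{r-1})+\theta_i(c_0)x+\cdots+\theta_i(c_{r-2})x^{r-1}$, which is precisely the first block of $T_{\theta_i}(c)$. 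The identical argument modulo $x^s-1$ handles the second block.

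For the forward implication I would assume that $C$ is a double skew cyclic code, hence an $\mathrm{R}$-submodule of $\mathrm{R}^r\times\mathrm{R}^s$ closed under $T_{\theta_i}$. Closure under $T_{\theta_i}$ together with the preceding computation shows that $x\cdot(c(x)\mid c^\prime(x))\in C$ whenever $(c(x)\mid c^\prime(x))\in C$; iterating gives closure under multiplication by every power $x^j$. Combining this with $\mathrm{R}$-linearity and distributivity, any $p(x)=\sum_j a_j x^j\in\mathrm{R}[x;\theta_i]$ satisfies $p(x)\cdot(c(x)\mid c^\prime(x))=\sum_j a_j\bigl(x^j\cdot(c(x)\mid c^\prime(x))\bigr)\in C$, so $C$ is a left $\mathrm{R}[x;\theta_i]$-submodule. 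Conversely, if $C$ is a left $\mathrm{R}[x;\theta_i]$-submodule, then it is in particular closed under addition and under multiplication by the constants $\mathrm{R}\subset\mathrm{R}[x;\theta_i]$, hence $\mathrm{R}$-linear; and multiplication by $x\in\mathrm{R}[x;\theta_i]$ realizes $T_{\theta_i}$, so $C$ is closed under the double cyclic shift and is therefore a double skew cyclic code.

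I expect the main obstacle to be the noncommutative bookkeeping in the shift computation: because $x$ does not commute with the coefficients, one must apply $\theta_i$ at each step and track the single wraparound term carefully, so that reduction modulo $x^r-1$ (respectively $x^s-1$) reproduces precisely the $\theta_i$-twisted cyclic shift rather than an untwisted one. Once this identification is verified, both implications follow routinely from the module axioms.
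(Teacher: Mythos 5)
Your proof is correct and follows exactly the standard route the paper itself alludes to: the paper's own ``proof'' is the one-line remark that the claim follows from the definition, and your computation identifying $T_{\theta_i}$ with left multiplication by $x$ modulo $x^r-1$ and $x^s-1$, followed by closure under $\mathrm{R}$-scalars and powers of $x$, is precisely the routine verification being left implicit. No gap; your careful tracking of the $\theta_i$-twist on the wraparound term is the only point of substance and you handle it correctly.
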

\begin{proof}     This can be proved easily by the definition of double-cyclic codes.
\end{proof}


\begin{theo}\label{D3}
Let $C$ be a double skew cyclic code of length $(r,  s)$ over $\mathrm{R}$. Then
\begin{align*}
C=\langle (g(x)\vert 0),  (l(x)\vert h(x))\rangle ,
\end{align*}
where $g(x),  l(x) \in \mathcal{R}_r$, $h(x)\in \mathcal{R}_s$ are monic polynomials with $g(x)=g_v(x)v+g_{v^\prime}(x)v^\prime$,   $l(x)=l_v(x)v+l_{v^\prime}(x)v^\prime $,  and  $h(x)=h_v(x)v+h_{v^\prime}(x)v^\prime$  such that $g_v(x),  g_{v^\prime}(x)|_r(x^r-1)$ and $h_v(x),  h_{v^\prime}(x)|_r(x^s-1)$.
\end{theo}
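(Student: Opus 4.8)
The plan is to mimic the standard generator-extraction argument for double (skew) cyclic codes, using the projection onto the second block together with the structure theorem for single skew cyclic codes over $\mathrm{R}$ recalled in Section~2. By Theorem~\ref{D2} I may regard $C$ as a left $\mathrm{R}[x;\theta_i]$-submodule of $\mathcal{R}_{r,s}=\mathcal{R}_r\times\mathcal{R}_s$, so every map I build must be a left-module homomorphism, and every ``divisor'' must be read as a right divisor in the noncommutative sense of the preceding lemmas.

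First I would define the projection $\pi\colon\mathcal{R}_{r,s}\to\mathcal{R}_s$ by $\pi(a(x)\mid b(x))=b(x)$. This is clearly a left $\mathrm{R}[x;\theta_i]$-homomorphism, so $\pi(C)$ is a left submodule of $\mathcal{R}_s$, i.e.\ a skew cyclic code of length $s$ over $\mathrm{R}$. Using the CRT decomposition $a+vb=(a+b)v+a(1-v)$ and the structure of skew cyclic codes over $\mathrm{R}$, the code $\pi(C)$ is generated by a monic $h(x)=h_v(x)v+h_{v^\prime}(x)v^\prime$ whose components $h_v(x),h_{v^\prime}(x)$ are right divisors of $x^s-1$. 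Choosing any preimage yields an element $(l(x)\mid h(x))\in C$; applying the division algorithm of Lemma~\ref{D1} against the generator produced next lets me reduce and normalize $l(x)$.

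Next I examine the kernel $K=\ker(\pi|_C)=\{(a(x)\mid 0)\in C\}$. Since $\pi|_C$ is a left-module map, $K$ is a left submodule, and projecting $K$ onto the first coordinate identifies it with a skew cyclic code of length $r$ over $\mathrm{R}$. Again by the single-block structure theorem this code is generated by a monic $g(x)=g_v(x)v+g_{v^\prime}(x)v^\prime$ with $g_v(x),g_{v^\prime}(x)$ right divisors of $x^r-1$; hence $K=\langle(g(x)\mid 0)\rangle$.

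The decisive step is to show $C=\langle(g(x)\mid 0),(l(x)\mid h(x))\rangle$. Given an arbitrary $c=(c_1(x)\mid c_2(x))\in C$, I use $c_2(x)\in\pi(C)=\langle h(x)\rangle$ to write $c_2(x)=a(x)h(x)$ for some $a(x)\in\mathrm{R}[x;\theta_i]$. Forming $c-a(x)\cdot(l(x)\mid h(x))$ annihilates the second block and, because $C$ is a left submodule, lands in $K$; thus it equals $b(x)\cdot(g(x)\mid 0)$ for some $b(x)$, whence $c=a(x)\cdot(l(x)\mid h(x))+b(x)\cdot(g(x)\mid 0)$ lies in the asserted generating set. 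The reverse inclusion is immediate since both generators belong to $C$. I expect the main obstacle to be the careful bookkeeping of the noncommutative left-module structure---ensuring that ``$h_v,h_{v^\prime}$ (resp.\ $g_v,g_{v^\prime}$) are right divisors'' is exactly the condition delivered by the structure theorem, and that left-multiplication by $a(x)$ cancels $c_2(x)$ as intended---rather than the projection-and-lift skeleton, which is routine once the single-block structure over $\mathrm{R}$ is in hand.
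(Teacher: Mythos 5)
Your proposal is correct and follows essentially the same route as the paper: project onto the second block, invoke the single-block structure theorem for skew cyclic codes over $\mathrm{R}$ to get $h(x)$, identify the kernel with a skew cyclic code of length $r$ generated by $(g(x)\vert 0)$, and lift a preimage $(l(x)\vert h(x))$. The only cosmetic difference is that you spell out the final generation step by explicit subtraction into the kernel, where the paper appeals to the first isomorphism theorem.
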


\begin{proof} Let $C$ be a left $\mathrm{R}[x;\theta_i]$-submodule of $\mathcal{R}_{r,s}$.  Define a projective homomorphism of a left $\mathrm{R}[x;\theta_i]$-modules as
\begin{align*}
\sigma \colon C &\to  \mathcal{R}_s \\
(v_1(x),  v_2(x)) &\mapsto v_2(x).
\end{align*}
From Theorem 12 of \cite{A3}, $Im(\sigma)=\langle h(x)\rangle $  where $h(x)=h_v(x)v+h_{v^\prime}(x)v^\prime\in \mathcal{R}_s$ with $h_v(x)|_r(x^s-1)$ and $h_{v^\prime}(x)|_r(x^s-1)$.  We know that
\begin{align*}
Ker(\sigma)=\lbrace   (v_1(x)\vert 0)\in C : v_1(x)\in \mathcal{R}_r \rbrace .
\end{align*}
Define
\begin{align*}
S=\lbrace  r(x)\in \mathcal{R}_r : (r(x)\vert 0)\in Ker(\sigma)   \rbrace .
\end{align*}
Clearly,  $S$ is a left $\mathrm{R}[x; \theta_i]$-submodule of $\mathcal{R}_r$.  By \cite{A3},  $S=\langle g(x)\rangle $,  where $g(x)=g_v(x)v+g_{v^\prime}(x)v^\prime \in \mathcal{R}_r$.  For any element $(r(x)\vert 0)$ in $Ker(\sigma)$,  we have $r(x)\in S =\langle g(x)\rangle $, where $r(x)=r_v(x)v+r_{v^\prime}(x)v^\prime$.  This means that there exists a polynomial $\Bar{h}(x)\in \mathcal{R}_r$ such that $r_v(x)=\Bar{h}_v(x).g_v(x)$ and $r_{v^\prime}(x)=\Bar{h}_{v^\prime}(x).g_{v^\prime}(x)$ and so $r(x)=\Bar{h}(x).g(x)$ .  Thus,  $(r(x)\vert 0)=\Bar{h}(x)\cdot(g(x)\vert 0)$.  This implies that $Ker(\sigma)$ is a left $\mathrm{R}[x;\theta_i]$-submodule of $C$ generated by $(g(x)\vert 0)$. Now, by the first isomorphism theorem,
\begin{align*}
\frac{C}{Ker(\sigma)}\cong Im(\sigma)=\langle h(x)\rangle .
\end{align*}
Further,  there exists an element  $(l(x)\vert h(x))\in C$  such that $\sigma(\langle (l(x)\vert h(x))\rangle )=\langle h(x)\rangle $.  This shows that any left $\mathrm{R}[x;\theta_i]$-submodule of $\mathcal{R}_{r,  s}$ is generated by two elements $(g(x)\vert 0)$ and $(l(x)\vert h(x))$.  Hence, we can write
\begin{align*}
C=\langle (g(x)\vert 0),  (l(x)\vert h(x))\rangle .
\end{align*}
\end{proof}

\begin{prop}\label{D4}
Let $C=\langle (g(x)\vert 0),  (l(x)\vert h(x))\rangle = \langle (g_v(x)v+g_{v^\prime}(x)v^\prime \vert 0), ( l_v(x)v+l_{v^\prime}(x)v^\prime \vert  h_v(x)v+h_{v^\prime}(x)v^\prime)  \rangle$ be a double skew cyclic code of length $(r,  s)$ over $\mathrm{R}$.  Then, $deg(l_v(x))<deg(g_v(x))$ and $deg(l_{v^\prime}(x) <deg(g_{v^\prime}(x))$.
\end{prop}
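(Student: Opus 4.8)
The plan is to treat this as a normalization of the second generator: given the generating pair $(g(x)\vert 0)$ and $(l(x)\vert h(x))$, I would show that $l(x)$ can be replaced by a polynomial of the required degree without altering the code $C$ or the image of the projection $\sigma$ used in the proof of Theorem~\ref{D3}. The engine of the argument is the division algorithm of Lemma~\ref{D1}, applied after splitting everything through the orthogonal idempotents $v$ and $v'=1-v$.

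Concretely, I would first pass to the Chinese Remainder decomposition $\mathrm{R}[x;\theta_i]\cong\mathbb{F}_q[x;\theta_i]\times\mathbb{F}_q[x;\theta_i]$, under which the code and all its generators split into their $v$- and $v'$-components. Since each factor $\mathbb{F}_q[x;\theta_i]$ is a skew polynomial ring over a field, it is an integral domain in which degrees add, and Lemma~\ref{D1} supplies a division there. Thus I would write $l_v(x)=q_v(x)g_v(x)+l'_v(x)$ and $l_{v'}(x)=q_{v'}(x)g_{v'}(x)+l'_{v'}(x)$ with $l'_v(x)=0$ or $\deg l'_v(x)<\deg g_v(x)$, and likewise for the $v'$-part. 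Setting $q(x)=q_v(x)v+q_{v'}(x)v'$ reassembles these quotients into a single skew polynomial over $\mathrm{R}$.

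Next I would exploit that $C$ is a left $\mathrm{R}[x;\theta_i]$-submodule (Theorem~\ref{D2}): since $(g(x)\vert 0)\in C$, the element $q(x)\cdot(g(x)\vert 0)=(q(x)g(x)\vert 0)$ lies in $C$, and hence so does $(l(x)\vert h(x))-(q(x)g(x)\vert 0)=(l'(x)\vert h(x))$, where $l'(x)=l'_v(x)v+l'_{v'}(x)v'$. The $h$-component is untouched, so $\sigma(l'(x)\vert h(x))=h(x)$ and $(l'(x)\vert h(x))$ is an equally valid second generator; consequently $C=\langle(g(x)\vert 0),(l'(x)\vert h(x))\rangle$ with $\deg l'_v<\deg g_v$ and $\deg l'_{v'}<\deg g_{v'}$. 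Renaming $l'$ as $l$ yields the stated bounds.

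The step I expect to be the genuine obstacle is justifying that the degree drop actually survives: in $\mathrm{R}[x;\theta_i]$ itself degrees need not add, because $v$ and $v'$ are zero divisors, so one cannot divide $l(x)$ by $g(x)$ in a single stroke and simply read off $\deg l'<\deg g$. Routing the division through the two field-coefficient factors, where both degree additivity and Lemma~\ref{D1} are available, is what makes the reduction rigorous; I would also note that the monic right-divisor hypotheses on $g_v$ and $g_{v'}$ coming from Theorem~\ref{D3} guarantee these component divisions are well defined and the quotients uniquely determined.
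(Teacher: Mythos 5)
Your proposal is correct and follows essentially the same route as the paper: the paper also normalizes the second generator by subtracting a left multiple of $(g(x)\vert 0)$ from $(l(x)\vert h(x))$ componentwise (one degree-reduction step $l(x)\mapsto l(x)-x^m g(x)$, implicitly iterated) and observes that the resulting code $D$ equals $C$. Your version simply packages the iteration as a single right division by the monic $g_v,g_{v'}$ in each $\mathbb{F}_q[x;\theta_i]$ factor, which is the same idea carried out slightly more carefully.
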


\begin{proof} Without loss of generality,
 assume that $deg(l_v(x))\geq deg(g_v(x))$ and $deg(l_v(x))-deg(g_v(x))=m$,  $m \geq 0$.  Consider the code $D$ generated by
\begin{align*}
\langle (g_v(x)v+g_{v^\prime}(x)v\vert 0),  ((l_v(x)v+l_{v^\prime}(x)v^\prime)-x^m(g_v(x)v+g_{v^\prime}(x)v^\prime)\vert h_v(x)v+h_{v^\prime}(x)v^\prime)\rangle ,
\end{align*}
we have $D\subseteq C$. Further,
\begin{align*}
(l_v(x)v+l_{v^\prime}(x)v^\prime\vert h_v(x)v+h_{v^\prime}(x)v^\prime)&=((l_v(x)v+l_{v^\prime}(x)v^\prime)-x^m(g_v(x)v+g_{v^\prime}(x)v^\prime)\vert h_v(x)v\\ &+h_{v^\prime}(x)v^\prime)+x^m(g_v(x)v+g_{v^\prime}(x)v^\prime\vert 0).
\end{align*}
Therefore,  $C \subseteq D$. Therefore, $C=D$. Hence, $deg(l(x))<deg(g(x))$.
\end{proof}

\begin{prop}\label{D5}
Let $C=\langle (g(x)\vert 0),  (l(x)\vert h(x))\rangle = \langle (g_v(x)v+g_{v^\prime}(x)v^\prime \vert 0), ( l_v(x)v+l_{v^\prime}(x)v^\prime \vert  h_v(x)v+h_{v^\prime}(x)v^\prime)  \rangle$ be a double skew cyclic code of length $(r,  s)$ over $\mathrm{R}$. Then, $g_v(x)\vert_r \frac{x^s-1}{h_v(x)}l_v(x)$ and $g_{v^\prime}(x)\vert_r \frac{x^s-1}{h_{v^\prime}(x)}l_{v^\prime}(x)$. 
\end{prop}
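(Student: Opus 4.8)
The plan is to manufacture, out of the second generator $(l(x)\mid h(x))\in C$, a codeword that is supported only on the first block, and then feed it into the description of $\mathrm{Ker}(\sigma)$ that was obtained inside the proof of Theorem~\ref{D3}. Since $h_v(x)\mid_r(x^s-1)$ and $h_{v^\prime}(x)\mid_r(x^s-1)$ by Theorem~\ref{D3}, I can write $x^s-1=\hat h_v(x)h_v(x)$ and $x^s-1=\hat h_{v^\prime}(x)h_{v^\prime}(x)$ in $\mathbb{F}_q[x;\theta_i]$, and I set $\hat h(x):=\hat h_v(x)v+\hat h_{v^\prime}(x)v^\prime$, which is precisely $\frac{x^s-1}{h(x)}$ read componentwise. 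Because $\theta_i$ fixes $v$ and $v^\prime$, these idempotents are central in $\mathrm{R}[x;\theta_i]$, so using $v^2=v$, $v^{\prime 2}=v^\prime$, $vv^\prime=0$ and $v+v^\prime=1$ a short computation gives $\hat h(x)h(x)=(x^s-1)v+(x^s-1)v^\prime=x^s-1$, hence $\hat h(x)h(x)\equiv 0$ in $\mathcal{R}_s$.

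Next, since $C$ is a left $\mathrm{R}[x;\theta_i]$-submodule of $\mathcal{R}_{r,s}$ (Theorem~\ref{D2}) and $(l(x)\mid h(x))\in C$, the element
\begin{align*}
\hat h(x)\cdot(l(x)\mid h(x))=(\hat h(x)l(x)\mid \hat h(x)h(x))=(\hat h(x)l(x)\mid 0)
\end{align*}
lies in $C$. Thus it belongs to $\mathrm{Ker}(\sigma)$, and the proof of Theorem~\ref{D3} identifies $\mathrm{Ker}(\sigma)=\langle(g(x)\mid 0)\rangle$. Consequently there is some $\mu(x)\in\mathrm{R}[x;\theta_i]$ with $(\hat h(x)l(x)\mid 0)=\mu(x)\cdot(g(x)\mid 0)$ in $\mathcal{R}_{r,s}$, that is, $\hat h(x)l(x)\equiv \mu(x)g(x)\pmod{x^r-1}$.

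The remaining point — and the one I expect to be the main obstacle — is to upgrade this congruence modulo $x^r-1$ into genuine right-divisibility by $g_v(x)$ and $g_{v^\prime}(x)$ in the skew polynomial ring, respecting the noncommutative multiplication order. Here I would use that $g_v(x)\mid_r(x^r-1)$ and $g_{v^\prime}(x)\mid_r(x^r-1)$ (Theorem~\ref{D3}), so $g(x)\mid_r(x^r-1)$ as a pair; writing $x^r-1=\kappa(x)g(x)$ and noting that the ideal $\langle x^r-1\rangle$ is two-sided, the error term produced by reducing modulo $x^r-1$ is itself a left multiple of $g(x)$ and can be absorbed: $\hat h(x)l(x)=\mu(x)g(x)+\nu(x)(x^r-1)=\bigl(\mu(x)+\nu(x)\kappa(x)\bigr)g(x)$ for a suitable $\nu(x)$, whence $g(x)\mid_r\hat h(x)l(x)$.

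Finally, I would split this relation through the idempotent decomposition. Multiplying by $v$ and by $v^\prime$ and using centrality and orthogonality of $v,v^\prime$, the single right-divisibility $g(x)\mid_r\hat h(x)l(x)$ separates into $g_v(x)\mid_r\hat h_v(x)l_v(x)$ and $g_{v^\prime}(x)\mid_r\hat h_{v^\prime}(x)l_{v^\prime}(x)$, i.e.
\begin{align*}
g_v(x)\mid_r\frac{x^s-1}{h_v(x)}\,l_v(x),\qquad g_{v^\prime}(x)\mid_r\frac{x^s-1}{h_{v^\prime}(x)}\,l_{v^\prime}(x),
\end{align*}
which is the desired conclusion. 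The only subtle bookkeeping is to carry the skew products in the correct order throughout and to confirm that the componentwise right division used to define $\hat h_v,\hat h_{v^\prime}$ is consistent with the pairwise divisibility notation $\mid_r$ of the earlier lemmas.
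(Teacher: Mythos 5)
Your argument is correct and is essentially the paper's own proof, which simply observes that $\frac{x^s-1}{h(x)}\cdot(l(x)\vert h(x))=(\frac{x^s-1}{h(x)}l(x)\vert 0)\in \mathrm{Ker}(\sigma)=\langle (g(x)\vert 0)\rangle$ and concludes; you have just spelled out the componentwise definition of $\frac{x^s-1}{h(x)}$, the absorption of the $x^r-1$ reduction, and the idempotent splitting that the paper leaves implicit.
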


\begin{proof} Since $\frac{x^s-1}{h(x)}\cdot(l(x)\vert h(x))=(\frac{x^s-1}{h(x)}l(x) \vert 0)\in Ker (\sigma)=\langle (g(x)\vert0)\rangle $,  the proof holds.
\end{proof}

\begin{coll}\label{D6}
Let $C=\langle (g(x)\vert 0),  (l(x)\vert h(x))\rangle = \langle (g_v(x)v+g_{v^\prime}(x)v^\prime \vert 0), ( l_v(x)v+l_{v^\prime}(x)v^\prime \vert  h_v(x)v+h_{v^\prime}(x)v^\prime)  \rangle$ be a double skew cyclic code of length $(r,  s)$ over $\mathrm{R}$.  Then, $g_v(x)\vert_r \frac{x^s-1}{h_v(x)}gcd(g_v(x),  l_v(x))$ and $g_{v^\prime}(x)\vert_r \frac{x^s-1}{h_{v^\prime}(x)}gcd(g_{v^\prime}(x),  l_{v^\prime}(x))$.
\end{coll}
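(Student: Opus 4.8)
The plan is to prove the two divisibility relations separately over the components indexed by $v$ and $v^\prime$, since by the Chinese Remainder decomposition $g(x)=g_v(x)v+g_{v^\prime}(x)v^\prime$ everything splits along the orthogonal idempotents $v$ and $v^\prime=1-v$; the two arguments are verbatim identical, so I describe only the $v$-part and abbreviate $K_v(x):=\frac{x^s-1}{h_v(x)}$ and $d_v(x):=\gcd(g_v(x),l_v(x))$. The first ingredient is Proposition \ref{D5}, which already delivers $g_v(x)\vert_r K_v(x)l_v(x)$. The second, trivial, ingredient is $g_v(x)\vert_r K_v(x)g_v(x)$, since $K_v(x)g_v(x)$ manifestly carries $g_v(x)$ as a right factor. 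Hence $g_v(x)$ is a common right divisor of both $K_v(x)l_v(x)$ and $K_v(x)g_v(x)$, and the goal is to upgrade this to $g_v(x)\vert_r K_v(x)d_v(x)$.

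To combine the two ingredients I would invoke the principal ideal domain structure of $\mathbb{F}_q[x;\theta_i]$ from Lemma \ref{D1}. Right divisibility $f\vert_r g$ is equivalent to the left-ideal containment $\mathbb{F}_q[x;\theta_i]\,g\subseteq\mathbb{F}_q[x;\theta_i]\,f$, so the greatest common right divisor $d_v(x)$ is precisely the monic generator of $\mathbb{F}_q[x;\theta_i]g_v(x)+\mathbb{F}_q[x;\theta_i]l_v(x)$. This yields a left Bezout identity $d_v(x)=a(x)g_v(x)+b(x)l_v(x)$ for suitable $a(x),b(x)\in\mathbb{F}_q[x;\theta_i]$. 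Multiplying on the left by $K_v(x)$ gives $K_v(x)d_v(x)=K_v(x)a(x)g_v(x)+K_v(x)b(x)l_v(x)$; the first summand lies in $\mathbb{F}_q[x;\theta_i]g_v(x)$ outright, and the plan is to show the second summand does too, whence $g_v(x)\vert_r K_v(x)d_v(x)$, as required.

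The main obstacle is exactly this second summand $K_v(x)b(x)l_v(x)$: in the commutative template of \cite{A6} one simply slides $K_v(x)$ past $b(x)$ to reduce to $g_v(x)\vert_r K_v(x)l_v(x)$, but in the skew ring $\mathbb{F}_q[x;\theta_i]$ the factors $K_v(x)$ and $b(x)$ need not commute, so this step genuinely requires justification. I would resolve it by exploiting that $x^s-1$ is central in $\mathbb{F}_q[x;\theta_i]$ whenever the order of $\theta_i$ divides $s$ (the regime in which the factorization $x^s-1=K_v(x)h_v(x)$ is available and $h_v(x)\vert_r(x^s-1)$ is meaningful); centrality lets the relevant power of $x$ be transported through $b(x)$ up to a left multiple of $g_v(x)$, so that $K_v(x)b(x)l_v(x)$ is reduced to a left multiple of $K_v(x)l_v(x)$ and absorbed by Proposition \ref{D5}. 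Once the $v$-component is settled, the identical argument with $K_{v^\prime}(x)=\frac{x^s-1}{h_{v^\prime}(x)}$ and $d_{v^\prime}(x)=\gcd(g_{v^\prime}(x),l_{v^\prime}(x))$ yields $g_{v^\prime}(x)\vert_r K_{v^\prime}(x)d_{v^\prime}(x)$, and recombining through $v$ and $v^\prime$ completes the proof.
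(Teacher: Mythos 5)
Your skeleton is the same as the paper's: you decompose along the idempotents $v$ and $v^{\prime}$, take $g_v(x)\vert_r \frac{x^s-1}{h_v(x)}g_v(x)$ as the trivial ingredient (the paper gets it from Theorem \ref{D3}), take $g_v(x)\vert_r \frac{x^s-1}{h_v(x)}l_v(x)$ from Proposition \ref{D5}, and then combine the two to reach $g_v(x)\vert_r \frac{x^s-1}{h_v(x)}\gcd(g_v(x),l_v(x))$. The paper's proof stops exactly there: it asserts the combination step as if it were immediate, which is the commutative reflex (a common divisor of $Kg$ and $Kl$ divides $K\gcd(g,l)$). To your credit, you correctly identify that this last step is the entire content of the corollary in the skew setting, and you make the mechanism explicit via the left Bezout identity $d_v(x)=a(x)g_v(x)+b(x)l_v(x)$ for the greatest common right divisor.

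The gap is in your resolution of the term $K_v(x)b(x)l_v(x)$, and as written it does not close. Proposition \ref{D5} gives $K_v(x)l_v(x)\in \mathbb{F}_q[x;\theta_i]\,g_v(x)$, but $K_v(x)b(x)l_v(x)$ is not a left multiple of $K_v(x)l_v(x)$ unless $K_v(x)$ and $b(x)$ commute, and your appeal to the centrality of $x^s-1$ does not repair this: $K_v(x)=\frac{x^s-1}{h_v(x)}$ is neither a power of $x$ nor central (it is only a left factor of the central element $x^s-1$), so there is no identity that "transports" $b(x)$ past $K_v(x)$ modulo left multiples of $g_v(x)$. The same obstruction reappears if one instead writes $g_v=g_1d_v$, $l_v=l_1d_v$ with $\mathrm{gcrd}(g_1,l_1)=1$ and tries to conclude $g_1\vert_r K_v$ from $g_1\vert_r K_vl_1$: the noncommutative Euclid lemma only yields right divisibility by a polynomial \emph{similar} to $g_1$, not by $g_1$ itself. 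So the step you flag genuinely needs an argument (or an additional hypothesis, e.g., on the order of $\theta_i$ relative to $s$), and neither your sketch nor the paper's one-line proof supplies it; you have at least localized precisely where the skew case departs from the template of \cite{A6}.
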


\begin{proof} By Theorem \ref{D3},  $h_v(x)\vert_r(x^s-1)$  and $h_{v^\prime}(x)\vert_r(x^s-1)$. This implies that $g_v(x)\vert_r \frac{x^s-1}{h_v(x)}g_v(x)$ and $g_{v^\prime}(x)\vert_r \frac{x^s-1}{h_{v^\prime}(x)}g_{v^\prime}(x)$. Also,  considering Proposition \ref{D5},  we have $g_v(x)\vert_r \frac{x^s-1}{h_v(x)}gcd(g_v(x),  l_v(x))$ and\\ $g_{v^\prime}(x)\vert_r \frac{x^s-1}{h_{v^\prime}(x)}gcd(g_{v^\prime}(x),  l_{v^\prime}(x))$.
\end{proof}

\begin{coll}\label{D7}
Let $C=\langle (g(x)\vert 0),  (l(x)\vert h(x))\rangle = \langle (g_v(x)v+g_{v^\prime}(x)v^\prime \vert 0), ( l_v(x)v+l_{v^\prime}(x)v^\prime \vert  h_v(x)v+h_{v^\prime}(x)v^\prime)  \rangle$ be a double skew cyclic code of length $(r,  s)$ over $\mathrm{R}$. Then, $ lcm(g_v(x),  l_v(x))\vert_r \frac{x^s-1}{h_v(x)}l_v(x)$ and $ lcm(g_{v^\prime}(x),  l_{v^\prime}(x))\vert_r \frac{x^s-1}{h_{v^\prime}(x)}l_{v^\prime}(x)$.
\end{coll}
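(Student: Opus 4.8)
The plan is to reduce the statement to the defining (minimality) property of the least common multiple, combined with the divisibility already recorded in Proposition \ref{D5}. I would argue componentwise, treating the $v$-part in full; the $v^\prime$-part is word-for-word identical after replacing every subscript $v$ by $v^\prime$ and every $h_v$ by $h_{v^\prime}$.

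First I would exhibit two right-divisibilities that together make $\frac{x^s-1}{h_v(x)}l_v(x)$ a common left multiple of $g_v(x)$ and $l_v(x)$. Proposition \ref{D5} supplies the first directly, namely $g_v(x)\vert_r \frac{x^s-1}{h_v(x)}l_v(x)$. For the second, since $h_v(x)\vert_r(x^s-1)$ by Theorem \ref{D3}, the quotient $k(x):=\frac{x^s-1}{h_v(x)}$ is a well-defined element of $\mathbb{F}_q[x;\theta_i]$, and the polynomial under consideration is literally the product $k(x)\,l_v(x)$; thus $l_v(x)$ appears as a right factor and hence $l_v(x)\vert_r \frac{x^s-1}{h_v(x)}l_v(x)$.

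With both $g_v(x)\vert_r \frac{x^s-1}{h_v(x)}l_v(x)$ and $l_v(x)\vert_r \frac{x^s-1}{h_v(x)}l_v(x)$ established, the polynomial $\frac{x^s-1}{h_v(x)}l_v(x)$ plays exactly the role of a candidate common multiple $\gamma^\prime(x)$ in the definition of the least common multiple given earlier: it is a polynomial admitting both $g_v(x)$ and $l_v(x)$ as right divisors. By the minimality clause of that definition, the monic polynomial $lcm(g_v(x),l_v(x))$ right-divides every such $\gamma^\prime(x)$, and in particular $lcm(g_v(x),l_v(x))\vert_r \frac{x^s-1}{h_v(x)}l_v(x)$. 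Repeating the argument on the $v^\prime$-components yields $lcm(g_{v^\prime}(x),l_{v^\prime}(x))\vert_r \frac{x^s-1}{h_{v^\prime}(x)}l_{v^\prime}(x)$, which finishes the proof.

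The only genuine point of care, and hence the main obstacle (modest as it is), is the noncommutativity of $\mathbb{F}_q[x;\theta_i]$: I must keep the notion of \emph{right} divisor consistent throughout, ensure that $\frac{x^s-1}{h_v(x)}$ is read as the left quotient (legitimate precisely because $h_v(x)\vert_r(x^s-1)$), and note that the defining property of $lcm$ is phrased via right-divisibility, so it applies even though $\frac{x^s-1}{h_v(x)}l_v(x)$ itself need not be monic. Once these conventions are pinned down, no further computation is required.
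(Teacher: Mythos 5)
Your proof is correct and is exactly the argument the paper has in mind: the paper's own proof is literally the single word ``Clear,'' and what it leaves implicit is precisely your chain of observations, namely that Proposition \ref{D5} gives $g_v(x)\vert_r \frac{x^s-1}{h_v(x)}l_v(x)$, that $l_v(x)$ trivially right-divides this product, and that the minimality clause in the paper's definition of $lcm$ then forces $lcm(g_v(x),l_v(x))\vert_r \frac{x^s-1}{h_v(x)}l_v(x)$. Your care about left quotients versus right divisors in the noncommutative setting is appropriate and does not change the conclusion.
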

\begin{proof} Clear.
\end{proof}

\begin{prop}\label{D8}
Let $C$ be a double skew cyclic code of length $(r,  s)$ over $\mathrm{R}$ with $C=\langle (g(x)\vert 0),  (l(x)\vert h(x))\rangle = \langle (g_v(x)v+g_{v^\prime}(x)v^\prime \vert 0), ( l_v(x)v+l_{v^\prime}(x)v^\prime \vert  h_v(x)v+h_{v^\prime}(x)v^\prime)  \rangle$.  Define the sets
\begin{align*}
S_1^v=\bigcup_{i=0}^{r-deg(g_v(x))-1}\lbrace  x^i\cdot(g_v(x)v\vert 0)  \rbrace,
\end{align*}
\begin{align*}
S_1^{v^\prime}=\bigcup_{i=0}^{r-deg(g_{v^\prime}(x))-1}\lbrace  x^i\cdot(g_{v^\prime}(x)v^\prime\vert 0)  \rbrace,
\end{align*}
\begin{align*}
S_2^v=\bigcup_{i=0}^{s-deg(h_v(x))-1}\lbrace  x^i\cdot(l_v(x)v \vert h_v(x)v   \rbrace,
\end{align*}
\begin{align*}
S_2^{v^\prime}=\bigcup_{i=0}^{s-deg(h_{v^\prime}(x))-1}\lbrace  x^i\cdot( l_{v^\prime}(x)v^\prime\vert  h_{v^\prime}(x)v^\prime)  \rbrace.
\end{align*}
Then $S_1^v \cup S_1^{v^\prime}\cup S_2^v \cup S_2^{v^\prime}$ forms a minimal generating set for $C$ as a left $\mathrm{R}[x;\theta_i]$-module.
\end{prop}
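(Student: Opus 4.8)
The statement asserts that $\mathcal{S}:=S_1^v \cup S_1^{v'}\cup S_2^v \cup S_2^{v'}$ is a minimal generating set for $C$. Since Theorem \ref{D3} already exhibits $C$ as a left $\mathrm{R}[x;\theta_i]$-module on two generators, ``minimal'' here can only mean minimal as an $\mathbb{F}_q$-spanning set, i.e. that $\mathcal{S}$ is an $\mathbb{F}_q$-basis of $C$ (the set realized as the rows of the generator matrix): over $\mathbb{F}_q$ neither the shift $x$ nor the idempotents $v,v'=1-v$ act as scalars, so each listed vector is genuinely needed. I would therefore prove two things, that $\mathcal{S}$ spans $C$ over $\mathbb{F}_q$ and that $\mathcal{S}$ is $\mathbb{F}_q$-linearly independent. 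The structural device is that $\theta_i$ fixes $v$, so $v,v'$ are central idempotents of $\mathrm{R}[x;\theta_i]$ and $C=vC\oplus v'C$ with $vC\cap v'C=\{0\}$. Because $v\,g(x)=g_v(x)v$, $v\,l(x)=l_v(x)v$ and $v\,h(x)=h_v(x)v$, the summand $vC$ is generated by $(g_v(x)v\mid 0)$ and $(l_v(x)v\mid h_v(x)v)$, and symmetrically for $v'C$. It thus suffices to treat the $v$-part; the $v'$-part is identical with primes, and the two are recombined through $c=vc+v'c$.

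For spanning, take $c\in C$ and write $vc=P(x)(g_v(x)v\mid 0)+Q(x)(l_v(x)v\mid h_v(x)v)$ with $P,Q\in\mathbb{F}_q[x;\theta_i]$. Using Lemma \ref{D1} I would divide $Q$ on the right by $\frac{x^s-1}{h_v(x)}$, obtaining $Q=q_1\,\frac{x^s-1}{h_v(x)}+Q'$ with $Q'=0$ or $\deg Q'<s-\deg h_v(x)$. Since $\frac{x^s-1}{h_v(x)}\cdot(l_v(x)v\mid h_v(x)v)=\bigl(\tfrac{x^s-1}{h_v(x)}l_v(x)v\mid 0\bigr)\in \mathrm{Ker}(\sigma)=\langle(g_v(x)v\mid 0)\rangle$ — which is exactly Proposition \ref{D5} — the $q_1$-part is a multiple of the first generator and can be absorbed, leaving $vc=\widetilde P(x)(g_v(x)v\mid 0)+Q'(x)(l_v(x)v\mid h_v(x)v)$. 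Because $g_v(x)\mid_r(x^r-1)$, a second right-division of $\widetilde P$ by $\frac{x^r-1}{g_v(x)}$ yields $\widetilde P'$ with $\deg\widetilde P'<r-\deg g_v(x)$, the removed quotient annihilating $(g_v(x)v\mid 0)$ modulo $x^r-1$. Writing $\widetilde P'=\sum_i d_i x^i$ and $Q'=\sum_j c_j x^j$ with coefficients in $\mathbb{F}_q$ exhibits $vc$ as an $\mathbb{F}_q$-combination of $S_1^v\cup S_2^v$; adding the analogous expression for $v'c$ gives $c\in\operatorname{span}_{\mathbb{F}_q}\mathcal{S}$.

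For minimality I would check $\mathbb{F}_q$-linear independence. The first coordinates $x^i g_v(x)v$ of $S_1^v$ are monic of the distinct degrees $\deg g_v(x),\dots,r-1$, and the second coordinates $x^i h_v(x)v$ of $S_2^v$ are monic of the distinct degrees $\deg h_v(x),\dots,s-1$; as the vectors of $S_1^v$ have zero second coordinate, a vanishing combination of $S_1^v\cup S_2^v$ forces the $S_2^v$-coefficients to vanish (reading the second coordinate) and then the $S_1^v$-coefficients (reading the first), so $S_1^v\cup S_2^v$ is independent, and likewise $S_1^{v'}\cup S_2^{v'}$. Finally every entry of a $v$-indexed vector lies in $v\mathrm{R}$ and every entry of a $v'$-indexed vector in $v'\mathrm{R}$; since $v,v'$ are $\mathbb{F}_q$-independent in $\mathrm{R}$, applying the projector $v$ separates the two families, so $\mathcal{S}$ is $\mathbb{F}_q$-independent and hence a basis, with $|\mathcal{S}|=2r+2s-\deg g_v-\deg g_{v'}-\deg h_v-\deg h_{v'}=\dim_{\mathbb{F}_q}C$.

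The main obstacle is the bookkeeping in the spanning step: one must perform the skew division on the correct side so that each remainder multiplies $g_v$ (resp. $h_v$) on the side that is compatible with reduction modulo $x^r-1$ (resp. $x^s-1$), and one must verify that the discarded quotient terms genuinely land in $\mathrm{Ker}(\sigma)=\langle(g_v(x)v\mid 0)\rangle$ rather than merely in $C$. This is precisely where Proposition \ref{D5} and the right-divisibilities $g_v(x)\mid_r(x^r-1)$, $h_v(x)\mid_r(x^s-1)$ from Theorem \ref{D3} are indispensable. Once the central idempotent splitting $C=vC\oplus v'C$ is installed, the remaining degree and independence arguments are routine.
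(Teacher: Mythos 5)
Your proof is correct, and it is essentially the argument the paper intends: the paper gives no details, deferring entirely to Proposition~3.2 of Borges--Fern\'andez-C\'ordoba--Ten-Valls \cite{A5}, whose proof is exactly your division-algorithm reduction of the quotients modulo $\frac{x^r-1}{g(x)}$ and $\frac{x^s-1}{h(x)}$ followed by the degree/independence count. Your only addition is the splitting through the central idempotents $v,v'$ (legitimate since $\theta_i$ fixes $v$), which is precisely the ``slight modification'' needed to pass from $\mathbb{Z}_2$ to $\mathrm{R}$, so the two routes coincide.
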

\begin{proof} The proof is a slight modification of the proof of Proposition 3.2 in \cite{A5}.
\end{proof}

 Let $C$ be a double skew cyclic code over $\mathrm{R}$ of length $(r,  s)$. We use $C_r$ and $C_s$ to denote the punctured codes of $C$ by deleting the coordinates outside the $r$th and $s$th components, respectively.
 \begin{theo}  \label{D9}
Let $C=\langle (g(x)\vert 0),  (l(x)\vert h(x))\rangle = \langle (g_v(x)v+g_{v^\prime}(x)v^\prime \vert 0), ( l_v(x)v+l_{v^\prime}(x)v^\prime \vert  h_v(x)v+h_{v^\prime}(x)v^\prime)  \rangle $ be a double skew cyclic code of length $(r,  s)$ over $\mathrm{R}$.   Then $C$ is permutation equivalent to an $\mathbb{F}_q$-double skew cyclic code with generating matrix

	$$G= \left( \begin{array}{rrr|rrr}
	I_{r-deg(g_v(x))} & A_1v & A_2v & 0 &0 &0 \\
	0 & B_kv & B_2v & D_1v & I_kv & 0\\
	0 & 0 & 0 & E_1v & E_2v & I_{s-deg(h_v(x))-k_vv}\\
 I_{r-deg(g_{v^\prime}(x))} & A_3{v^\prime} & A_4{v^\prime} & 0 &0 &0 \\
	0 & B_k{v^\prime} & B_4{v^\prime} & D_2{v^\prime} & I_k{v^\prime} & 0\\
	0 & 0 & 0 & E_3{v^\prime} & E_4{v^\prime} & I_{s-deg(h_{v^\prime}(x))-k_{v^\prime}{v^\prime}}
	 \end{array} \right),$$
where $k_v=deg(g_v(x))-deg(gcd_r(g_v(x),  l_v(x)))$ and $k_{v^\prime}=deg(g_{v^\prime}(x))-deg(gcd_r(g_{v^\prime}(x),  l_{v^\prime}(x)))$.
	
 \end{theo}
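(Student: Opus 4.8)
The plan is to exploit the idempotent (Chinese Remainder) decomposition of $\mathrm{R}$ and reduce the statement to the structure theory of $\mathbb{F}_q$-double skew cyclic codes. Writing $v'=1-v$, every polynomial over $\mathrm{R}$ splits into a $v$-part and a $v'$-part, and since $v\mathrm{R}\cong\mathbb{F}_q\cong v'\mathrm{R}$, the code decomposes as $C=vC_v\oplus v'C_{v'}$, where $C_v=\langle(g_v(x)\mid 0),(l_v(x)\mid h_v(x))\rangle$ and $C_{v'}=\langle(g_{v'}(x)\mid 0),(l_{v'}(x)\mid h_{v'}(x))\rangle$ are $\mathbb{F}_q$-double skew cyclic codes of length $(r,s)$, guaranteed by Theorem \ref{D3}. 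Thus it suffices to produce the three-block-row generating matrix for one $\mathbb{F}_q$-component and then assemble $G$ by placing the $C_v$-matrix (scaled by $v$) in the top three block rows and the $C_{v'}$-matrix (scaled by $v'$) in the bottom three; the coordinate permutation putting these into systematic form supplies the claimed permutation equivalence.

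For a single component, say the $v$-part, I would begin from the minimal spanning set of Proposition \ref{D8}: the $r-deg(g_v(x))$ shifts $x^i\cdot(g_v(x)\mid 0)$ together with the $s-deg(h_v(x))$ shifts $x^i\cdot(l_v(x)\mid h_v(x))$, written as the rows of a matrix over $\mathbb{F}_q$. Because $g_v(x)$ is monic and $\theta_i$ is an automorphism (so skew multiplication by $x$ preserves leading terms), the family $\{x^i g_v(x)\}$ occupies distinct leading positions in the first $r$ coordinates; a permutation of those coordinates turns the leading entries into the identity block $I_{r-deg(g_v(x))}$, giving the first block row $(I\mid A_1\mid A_2\mid 0\mid 0\mid 0)$. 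Likewise the shifts $x^i h_v(x)$ are in echelon position in the last $s$ coordinates and, after permutation, contribute identity pivots there.

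The core of the argument is to explain how the $s-deg(h_v(x))$ rows arising from $(l_v\mid h_v)$ split into the second and third block rows. Reducing the left ($r$-coordinate) parts $x^i l_v(x)$ modulo the submodule generated by $g_v(x)$, the surviving contribution is governed by $gcd_r(g_v(x),l_v(x))$: exactly $k_v=deg(g_v(x))-deg(gcd_r(g_v(x),l_v(x)))$ of these rows retain an independent, non-removable left component and form the second block row $(0\mid B_k\mid B_2\mid D_1\mid I_k\mid 0)$, while the remaining $s-deg(h_v(x))-k_v$ rows have their left components cleared by subtracting $\mathbb{F}_q[x;\theta_i]$-combinations of the first block row, yielding the purely-right third block row $(0\mid 0\mid 0\mid E_1\mid E_2\mid I_{s-deg(h_v(x))-k_v})$. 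Here the divisibility relations of Proposition \ref{D5} and Corollaries \ref{D6} and \ref{D7} guarantee both that these cancellations are possible and that precisely $k_v$ rows resist them, which fixes the block sizes. Performing the identical steps on the $v'$-component and stacking the two scaled matrices produces $G$.

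I expect the main obstacle to be exactly this counting step: justifying rigorously, in the noncommutative setting, that the number of independent left pivots contributed by the $(l_v\mid h_v)$-shifts equals $k_v=deg(g_v)-deg(gcd_r(g_v,l_v))$, where right divisibility ($\vert_r$) must be tracked carefully through Lemma \ref{D1} and the division algorithm. A secondary technical point is reconciling the two component matrices into a single block matrix when $deg(g_v)\neq deg(g_{v'})$ (and similarly for $h$), since then the associated coordinate permutations differ; this is resolved by reading $G$ blockwise over the two idempotents, the $v$- and $v'$-rows acting on the disjoint CRT coordinates of $C=vC_v\oplus v'C_{v'}$.
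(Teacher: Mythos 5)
Your proposal follows essentially the same route as the paper: take the minimal spanning set of Proposition \ref{D8}, split along the idempotents $v$ and $v^\prime$, put the shift matrices into echelon form via a coordinate permutation, and identify the size of the middle block with $k_v=\deg(g_v(x))-\deg(\gcd_r(g_v(x),l_v(x)))$. The one place where you point at the wrong tool is precisely the counting step you flag as the crux: the paper obtains the rank $k_v$ of the $B$-block not from the divisibility relations of Proposition \ref{D5} and Corollaries \ref{D6}--\ref{D7} (which relate the $r$-part to the $s$-part through $\frac{x^s-1}{h}$), but from the observation that the punctured code $(C_r)_v$ is the skew cyclic code generated by $\gcd_r(g_v(x),l_v(x))$, so its dimension $r-\deg(\gcd_r(g_v(x),l_v(x)))$ exceeds the $r-\deg(g_v(x))$ rows already supplied by the $g_v$-shifts by exactly $k_v$.
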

 \begin{proof} By Proposition \ref{D8},  $C$ is generated by the matrix whose rows are the elements of the set $S_1^v \cup S_1^{v^\prime}\cup S_2^v \cup S_2^{v^\prime}$. Therefore,  $r-deg(g_v(x))$  and $s-deg(h_v(x))$ are the dimension of the matrices generated by the $\theta_i$-shifts of $(g_v(x)\vert 0)$ and  $(l_v(x)\vert h_v(x))$.   Then,  the generating matrix of the code $C$ is permutation equivalent to the following matrix
 	$$G= \left( \begin{array}{rr|rr}
	I_{r-deg(g_v(x))}v & Av & 0 &0  \\
	0 & Bv  & Dv & I_{s-deg(h_v(x))}v\\
 I_{r-deg(g_{v^\prime}(x))}v^\prime & Av^\prime & 0 &0  \\
	0 & Bv^\prime  & Dv^\prime & I_{s-deg(h_{v^\prime}(x))}v^\prime
	 \end{array} \right),$$

 Clearly,  $(C_r)_v$ is  a skew cyclic code generated by $gcd_r(g_v(x),  l_v(x))$. Also,  the submatrix $B$ has rank $deg(g_v(x))-gcd_r(g_v(x),  l_v(x))$.  Moreover,  the generating matrix of $C_r$ is permutation equivalent to the following matrix
 $$
	\begin{pmatrix}
	I_{r-deg(g_v(x))}v  & A_1v & A_2v \\
	0 & B_kv & B_1v\\
	0 & 0 & 0 \\
 I_{r-deg(g_{v^\prime}(x))}v^\prime  & A_1v^\prime & A_2v^\prime \\
	0 & B_kv^\prime & B_1v^\prime\\
	0 & 0 & 0 \\
	\end{pmatrix}
	\quad,
	$$
where $B_k$ is a square matrix of full rank with size $k_v\times k_v$ and $ k_{v^\prime}  \times k_{v^\prime} $, respectively. Hence, we can obtain the above standard form of generating matrix by applying the convenient operations.
  \end{proof}

\section{Duals of $\mathrm{R}$-double skew cyclic codes}
In this section, we present some results on the dual codes of the double skew cyclic codes over $\mathrm{R}$ as a generalization of skew cyclic codes over $\mathrm{R}$ and generator polynomials for the dual of an $\mathrm{R}$-double skew cyclic codes of length $(r,  s)$.

\begin{theo}
Let $C$ be a double skew cyclic code of length $(r,  s)$ over $\mathrm{R}$ with
\begin{align*}
C=\langle (g(x)\vert 0),  (l(x)\vert h(x))\rangle  = \langle (g_v(x)v+g_{v^\prime}(x)v^\prime \vert 0), ( l_v(x)v+l_{v^\prime}(x)v^\prime \vert  h_v(x)v+h_{v^\prime}(x)v^\prime)  \rangle .
\end{align*}
Then $C^\bot\in C_{r,  s}(\mathrm{R})$ and the set
\begin{align*}
C^\bot=\langle (\overline{g}(x)\vert 0),  (\overline{l}(x)\vert \overline{h}(x))\rangle  = \langle (\overline{g}_v(x)v+\overline{g}_{v^\prime}(x)v^\prime \vert 0), ( \overline{l}_v(x)v+\overline{l}_{v^\prime}(x)v^\prime \vert  \overline{h}_v(x)v+\overline{h}_{v^\prime}(x)v^\prime)  \rangle ,
\end{align*}
where $\overline{g}(x),   \overline{l}(x) \in \mathcal{R}_r$,  $\overline{h}(x)\in \mathcal{R}_s$ and $deg(\overline{l}(x))<deg(\overline{g}(x))$.
\end{theo}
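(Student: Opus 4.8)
The plan is to reduce the problem to the already-established structure theorem (Theorem~\ref{D3}) applied to the dual code, so the real work is to show that $C^\bot$ is itself a double skew cyclic code of the required form. First I would verify that $C^\bot \in C_{r,s}(\mathrm{R})$, i.e. that $C^\bot$ is closed under the double cyclic shift $T_{\theta_i}$. By Theorem~\ref{D2} this is equivalent to showing $C^\bot$ is a left $\mathrm{R}[x;\theta_i]$-submodule of $\mathcal{R}_{r,s}$. The standard mechanism is the observation that for the skew-cyclic inner product there is an orthogonality relation between the shift operator and its inverse: one checks that $\langle T_{\theta_i}(x),\, y\rangle$ is related to $\langle x,\, T_{\theta_i}^{-1}(y)\rangle$ under the action of $\theta_i$, using that $\theta_i$ is an automorphism of $\mathrm{R}$ and hence preserves the inner product up to applying $\theta_i$ coordinatewise. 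Concretely, if $x \in C$ and $y \in C^\bot$, then $T_{\theta_i}(x)\in C$ forces $\langle T_{\theta_i}(x), y\rangle = 0$, and translating this through the automorphism shows $T_{\theta_i}^{-1}(y)\in C^\bot$; since the shift has finite order on codewords of length $(r,s)$, closure under $T_{\theta_i}^{-1}$ yields closure under $T_{\theta_i}$ as well. This is the step I expect to require the most care, because one must track the automorphism $\theta_i$ through both the $r$-block and the $s$-block simultaneously and confirm the inner product behaves compatibly on $\mathrm{R}^r\times\mathrm{R}^s$.

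Once $C^\bot$ is known to lie in $C_{r,s}(\mathrm{R})$, the existence of generators in the stated form is immediate: applying Theorem~\ref{D3} directly to $C^\bot$ gives monic polynomials $\overline{g}(x),\overline{l}(x)\in\mathcal{R}_r$ and $\overline{h}(x)\in\mathcal{R}_s$ with
\begin{align*}
C^\bot=\langle (\overline{g}(x)\vert 0),\ (\overline{l}(x)\vert \overline{h}(x))\rangle,
\end{align*}
together with the $v,v^\prime$-decompositions $\overline{g}(x)=\overline{g}_v(x)v+\overline{g}_{v^\prime}(x)v^\prime$ and analogously for $\overline{l}(x)$ and $\overline{h}(x)$, where $\overline{g}_v(x),\overline{g}_{v^\prime}(x)\mid_r(x^r-1)$ and $\overline{h}_v(x),\overline{h}_{v^\prime}(x)\mid_r(x^s-1)$. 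The degree inequality $deg(\overline{l}(x))<deg(\overline{g}(x))$ then follows by invoking Proposition~\ref{D4} verbatim for the code $C^\bot$, since that proposition holds for any code in $C_{r,s}(\mathrm{R})$ presented in this generator form.

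So the skeleton is: (i) prove $C^\bot$ is a left $\mathrm{R}[x;\theta_i]$-submodule using the shift/inner-product identity and Theorem~\ref{D2}; (ii) quote Theorem~\ref{D3} to obtain the generators and their divisibility and CRT-decomposition properties; (iii) quote Proposition~\ref{D4} for the degree bound. The only genuinely substantive part is (i); steps (ii) and (iii) are formal consequences of earlier results. A convenient way to make (i) transparent is to pass through the CRT decomposition $\mathrm{R}=v\mathbb{F}_q\oplus v^\prime\mathbb{F}_q$, so that $C$ splits as a direct sum of two $\mathbb{F}_q$-double skew cyclic codes $C_v$ and $C_{v^\prime}$; the dual then decomposes as $(C^\bot)_v=(C_v)^\bot$ and $(C^\bot)_{v^\prime}=(C_{v^\prime})^\bot$, each of which is a double skew cyclic code over $\mathbb{F}_q$ by the field-case result of Aydogdu et al.~\cite{A7}, and reassembling the two components via CRT gives the generator structure over $\mathrm{R}$ directly.
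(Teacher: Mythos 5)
Your proposal is correct and takes essentially the same route as the paper: the paper's proof likewise establishes only the closure of $C^\bot$ under $T_{\theta_i}$, by pairing $d\in C^\bot$ with $T_{\theta_i}^{\gamma-1}(c)$ (the inverse shift of $c\in C$, with $\gamma=lcm(r,s)$) and then applying $\theta_i$ to the resulting orthogonality relations, leaving the generator form and the degree bound to follow from Theorem~\ref{D3} and Proposition~\ref{D4} exactly as you indicate. Your reformulation via the identity $\langle T_{\theta_i}(x),y\rangle=\theta_i(\langle x,T_{\theta_i}^{-1}(y)\rangle)$ together with the finite order of the shift is the same mechanism, just run in the opposite direction.
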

\begin{proof} Let $$c=(c_0,  c_1,  \ldots,  c_{r-1}\vert c^\prime_0,  c^\prime_1,  \ldots,  c^\prime_{s-1})\in C;$$ and $$d=(d_0,  d_1,  \ldots,  d_{r-1}\vert d^\prime_0,  d^\prime_1,  \ldots,  d^\prime_{s-1})\in C^\bot.$$ In order to prove $T_{\theta_i}(d)\in C^\bot$, it is enough to show that $<T_{\theta_i}(d),  c>=0$. Let $\gamma=lcm(r,  s)$. Since $C$ is an $\mathrm{R}$-double skew cyclic code, one gets  $T_{\theta_i}^{\gamma-1}(c)=T_{\theta_i}^{\gamma-2}(c)\in C$.

Thus, $<d,  T_{\theta_i}^{\gamma-1}(c)>=0$  for $T_{\theta_i}^{\gamma-1}(c)=(\theta_i(c_{1}),    \ldots,  \theta_i(c_{r-1}),  \theta_i(c_0)\vert \theta_i(c^{\prime}_1),     \ldots,  \theta_i(c^{\prime}_{s-1}), \theta_i(c^{\prime}_0) ).$ According to the definition of the Euclidean inner product and then taken modulo $r$ and $s$,  we have

\begin{align*}
d_0\theta_i(c_1)+\ldots +d_{r-1}\theta_i(c_0)=0;\\
d_0^\prime\theta_i(c^\prime_1)+\ldots + d^\prime_{s-1}\theta_i(c^\prime_0)=0.
\end{align*}

 Applying $\theta_i$ to the above equalities, we have

 \begin{align*}
 \theta_i(d_0)c_1+\ldots + \theta_i(d_{r-2})c_{r-1}+ \theta_i(d_{r-1})c_0=0;\\
  \theta_i(d^\prime_0)c^\prime_1+\ldots +\theta_i(d^\prime_{s-2})c^\prime_{s-1} +\theta_i(d^\prime_{s-1})c^\prime_0=0.
 \end{align*}
 This yields that $\langle T_{\theta_i}(d),  c \rangle=0$, for $T_{\theta_i}(d)=(\theta_i(d_{r-1}),  \theta_i(d_{0}),  \ldots,  \theta_i(d_{r-2}) \vert \theta_i(d^\prime_{s-1}),  \theta_i(d^\prime_{0}),  \ldots,  \theta_i(d^\prime_{s-2}) )$, and hence the proof.

\end{proof}

Set $\xi_r(x):=\sum_{i=0}^{r-1}x^i$ and fix the unit element under the automorphism $\theta_i$.  Then, one can easily obtain the following:
\begin{lem}
 $x^{rs}-1=(x^r-1)\xi_s(x^r)=\xi_s(x^r)(x^r-1)$, where $r,  s\in \mathbb{N}$.
\end{lem}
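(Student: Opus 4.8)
The plan is to reduce this to the familiar commutative telescoping factorization $y^s-1=(y-1)\xi_s(y)$ specialized at $y=x^r$, while being careful that the ambient ring $\mathrm{R}[x;\theta_i]$ is noncommutative. First I would unwind the definition to $\xi_s(x^r)=\sum_{i=0}^{s-1}x^{ri}$, so that the claim becomes the assertion that both $(x^r-1)\sum_{i=0}^{s-1}x^{ri}$ and $\left(\sum_{i=0}^{s-1}x^{ri}\right)(x^r-1)$ equal $x^{rs}-1$.

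The key observation, which is precisely the content of the phrase \emph{fix the unit element under the automorphism $\theta_i$}, is that every polynomial appearing here has all of its coefficients in $\{0,1\}\subseteq\mathrm{R}$, and $\theta_i(1)=1$. Since the skew multiplication rule gives $x^a\cdot(1\cdot x^b)=\theta_i^a(1)x^{a+b}=x^{a+b}$, the relevant powers multiply exactly as in the ordinary commutative polynomial ring; in particular $x^{ra}\cdot x^{rb}=x^{r(a+b)}=x^{rb}\cdot x^{ra}$. Thus the noncommutativity induced by $\theta_i$ never intervenes for these particular elements.

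Granting this, I would compute the left-hand product by telescoping,
\begin{align*}
(x^r-1)\sum_{i=0}^{s-1}x^{ri}=\sum_{i=0}^{s-1}x^{r(i+1)}-\sum_{i=0}^{s-1}x^{ri}=x^{rs}-1,
\end{align*}
and then verify that the right-hand product $\left(\sum_{i=0}^{s-1}x^{ri}\right)(x^r-1)$ telescopes in exactly the same way, which is legitimate precisely because each summand commutes with $x^r$ by the previous paragraph.

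The only genuine obstacle here is the noncommutativity, so the crux of the argument is the middle step: confirming that because $\theta_i$ fixes the coefficients $0$ and $1$, the factorization is two-sided and the left and right products coincide. Once that is settled, both equalities in the statement follow from the single one-line telescoping above, and the proof is complete.
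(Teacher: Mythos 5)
Your proof is correct, and the paper itself offers no proof of this lemma (it merely asserts that one ``can easily obtain'' it after defining $\xi_r(x)$), so your telescoping argument is exactly the routine verification the authors have in mind. The one point that needed care --- that the monomials $x^{ri}$ with coefficient $1$ multiply as in the commutative case because $\theta_i(1)=1$, so both the left and right factorizations telescope to $x^{rs}-1$ --- is precisely the point you identify and settle.
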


\begin{defn}
Let $\alpha(x)=(\alpha_{1v}(x)v+\alpha_{1v^\prime}(x)v^\prime \vert \alpha_{2v}(x)v+\alpha_{2v^\prime}(x)v^\prime)$ and $\beta(x)=(\beta_{1v}(x)v+\beta_{1v^\prime}(x)v^\prime \vert
\beta_{2v}(x)v + \beta_{2v^\prime}(x)v^\prime)$ be two elements in $\mathcal{R}_{r,  s}$ and $\gamma=lcm(r,  s)$. Define the map
\begin{align*}
\circ \colon \mathcal{R}_{r,  s} \times \mathcal{R}_{r,  s} &\to \mathcal{R}_{\gamma} \text{ by}\\
\alpha(x)\circ  \beta(x) &  =  (\alpha_{1v}(x)  \vartheta^{\gamma-deg(\beta_{1v}(x))}(\beta_{1v}^\star(x))x^{\gamma-1-deg(\beta_{1v}(x))}\xi_{\frac{\gamma}{r}}(x^r)\\
&+\alpha_{2v}(x)\vartheta^{\gamma-deg(\beta_{2v}(x))}(\beta_{2v}^\star(x))x^{\gamma-1-deg(\beta_{2v}(x))}\xi_{\frac{\gamma}{s}}(x^s))\\
&+ (\alpha_{1v^\prime}(x)  \vartheta^{\gamma-deg(\beta_{1v^\prime}(x))}(\beta_{1v^\prime}^\star(x))x^{\gamma-1-deg(\beta_{1v^\prime}(x))}\xi_{\frac{\gamma}{r}}(x^r)\\
&+\alpha_{2v^\prime}(x)\vartheta^{\gamma-deg(\beta_{2v^\prime}(x))}(\beta_{2v^\prime}^\star(x))x^{\gamma-1-deg(\beta_{2v^\prime}(x))}\xi_{\frac{\gamma}{s}}(x^s)).
\end{align*}
\end{defn}
Clearly, the map $\circ$ is a bilinear map. For the convenience,  we denote $\circ(\alpha(x),  \beta(x))$  by $\alpha(x)\circ \beta(x)$.

\begin{prop}\label{D10}
Let $\alpha=(\alpha_{1,0},  \cdots,  \alpha_{1,r-1} \vert \alpha_{2,0}, \cdots, \alpha_{2,s-1})$ and $\beta=(\beta_{1,0},  \cdots,  \beta_{1,r-1} \vert \beta_{2,0}, \cdots, \beta_{2,s-1})$ be vectors in $\mathrm{R}^r\times \mathrm{R}^s$ with associated polynomials $\alpha(x)=(\alpha_1(x)\vert \alpha_2(x))$ and $\beta(x)=(\beta_1(x)\vert \beta_2(x))$, respectively. Then $\alpha$ and all its $\theta_i$-shifts are orthogonal to $\beta$ if and only if $\alpha(x)\circ \beta(x)=0$.
\end{prop}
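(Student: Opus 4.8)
The plan is to reduce the statement to a coefficient computation and to exploit the idempotent splitting $\mathrm{R}=v\mathrm{R}\oplus v^\prime\mathrm{R}$. First I would observe that both the Euclidean inner product on $\mathrm{R}^r\times\mathrm{R}^s$ and the map $\circ$ respect this splitting: since $v,v^\prime$ are orthogonal idempotents with $v+v^\prime=1$, an element of $\mathrm{R}$ is zero precisely when its $v$- and $v^\prime$-parts vanish, and in the definition of $\circ$ the first two summands are attached to $v$ while the last two are attached to $v^\prime$. Hence $\alpha(x)\circ\beta(x)=0$ if and only if the $v$-part and the $v^\prime$-part vanish independently, and likewise each condition $\langle T_{\theta_i}^t(\alpha),\beta\rangle=0$ splits into two $\mathbb{F}_q$-valued conditions. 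This lets me prove the equivalence separately over $\mathbb{F}_q$ for each component, so from now on I work with field-valued data and a single block pair of lengths $(r,s)$.

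Next, with $\gamma=lcm(r,s)$ fixed, I would compute the coefficients of $\alpha(x)\circ\beta(x)\in\mathcal{R}_\gamma$ explicitly. The role of the factor $\xi_{\gamma/r}(x^r)=\sum_{j=0}^{\gamma/r-1}x^{jr}$ (and its analogue $\xi_{\gamma/s}(x^s)$) is to replicate the length-$r$ (respectively length-$s$) block periodically, so that both blocks live naturally in the common period $\gamma$; using the identity $x^{\gamma}-1=(x^r-1)\xi_{\gamma/r}(x^r)$ from the preceding lemma, reduction modulo $x^\gamma-1$ then folds these copies together. The key claim I would establish is that, for each $t$ with $0\le t\le\gamma-1$, the coefficient of $x^{\gamma-1-t}$ in $\alpha(x)\circ\beta(x)$ equals $\theta_i^{\,c}\!\big(\langle T_{\theta_i}^{t}(\alpha),\beta\rangle\big)$ for a fixed power $c$ depending only on $t$. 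The factors $\vartheta^{\gamma-\deg(\beta_{1v}(x))}(\beta_{1v}^\star(x))$ and $x^{\gamma-1-\deg(\beta_{1v}(x))}$ (together with their $\beta_2$- and $v^\prime$-counterparts) are tailored exactly so that, after expanding the skew product with the rule $xr=\theta_i(r)x$, the contribution to $x^{\gamma-1-t}$ telescopes into the inner product of the $t$-th skew shift of $\alpha$ with $\beta$.

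Granting this coefficient identity, both directions follow at once. If $\alpha(x)\circ\beta(x)=0$, then every coefficient vanishes, hence every $\langle T_{\theta_i}^t(\alpha),\beta\rangle=0$ (an automorphism sends $0$ to $0$), so $\beta$ is orthogonal to $\alpha$ and all its shifts; conversely, if $\alpha$ and all its $\theta_i$-shifts are orthogonal to $\beta$, each coefficient of $\alpha(x)\circ\beta(x)$ is the image under an automorphism of one of these vanishing inner products, so the product is zero. Bilinearity of $\circ$, noted just above the statement, is what guarantees it suffices to track a single generic coefficient and then let $t$ range, since multiplication by $x$ implements the shift $T_{\theta_i}$.

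The main obstacle is the bookkeeping of the automorphism twists. Each application of $T_{\theta_i}$ both permutes coordinates and applies $\theta_i$ to every entry, so the entries of $T_{\theta_i}^{t}(\alpha)$ carry a power $\theta_i^{t}$, while the skew multiplication defining $\circ$ introduces further twists through $xr=\theta_i(r)x$ and through the explicit $\vartheta$-powers on the reciprocals. Aligning these two sources of twisting, so that the coefficient of $x^{\gamma-1-t}$ is literally $\theta_i^{\,c}(\langle T_{\theta_i}^{t}(\alpha),\beta\rangle)$ and not some mismatched conjugate, is the delicate point, and it is precisely why the exponents $\gamma-\deg(\beta)$ on $\vartheta$ and $\gamma-1-\deg(\beta)$ on $x$ are chosen as they are; verifying that this choice makes the twists cancel correctly, simultaneously for the two blocks of unequal period $r$ and $s$ under reduction modulo $x^\gamma-1$, is where the real work lies.
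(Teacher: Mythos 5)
Your proposal is correct and takes essentially the same route as the paper's own proof: the paper likewise expands $\alpha(x)\circ\beta(x)$ coefficientwise and shows that the coefficient of $x^{\gamma-1-k}$ equals $\theta_i^{\gamma-k}(\Delta_k)$, where $\Delta_k=\langle T_{\theta_i}^{k}(\alpha),\beta\rangle$, so that vanishing of the product is equivalent to vanishing of all shifted inner products. Your preliminary reduction via the idempotent splitting $\mathrm{R}=v\mathrm{R}\oplus v^{\prime}\mathrm{R}$ is a harmless extra step the paper leaves implicit, and your identification of the twist exponent (namely $c=\gamma-t$) and of the folding role of $\xi_{\gamma/r}(x^r)$ under reduction modulo $x^{\gamma}-1$ matches the paper's computation.
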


\begin{proof} Let $\alpha^{(k)}=(\theta_i^k(\alpha_{1,0-k}), \theta_i^k(\alpha_{1,1-k}), \cdots,  \theta_i^k(\alpha_{1, r-1-k})\vert \theta_i^k(\alpha_{2,0-k}),  \theta_i^k(\alpha_{2,1-k}), \cdots,  \theta_i^k(\alpha_{2, s-1-k}))$ be the $k$-th skew cyclic shift of $\alpha$  where $0\leq k\leq \gamma-1$.  By the definition of orthogonal vectors, we have  $\alpha^{(k)} \cdot  \beta=0$  if and only if $$\sum_{i=0}^{r-1}\theta_i^k(\alpha_{1,  i-k})\beta_{1,i}+\sum_{j=0}^{s-1}\theta_i^k(\alpha_{2, j-k})\beta_{2, j}=0.$$
Let $\Delta_i=\sum_{i=0}^{r-1}\theta_i^k(\alpha_{1, i-k})\beta_{1, i}+\sum_{j=0}^{s-1}\theta_i^k(\alpha_{2, j-k})\beta_{2, j}\in \mathrm{R}$. One can check that
\begin{align*}
\alpha(x)\circ \beta(x)& = [\sum_{\epsilon=0}^{r-1}\sum_{k=\epsilon}^{r-1}\alpha_{1, k-\epsilon}\theta_i^{\gamma-\epsilon}(\beta_{1, k})x^{\gamma-1-\epsilon}+\sum_{\epsilon=1}^{r-1}\sum_{k=\epsilon}^{r-1}\alpha_{1, k}\theta_i^\epsilon(\beta_{1, k-\epsilon})x^{\gamma-1+\epsilon}]\xi_{\frac{\gamma}{r}}(x^r) \\
&+[\sum_{\epsilon=0}^{s-1}\sum_{j=\epsilon}^{s-1}\alpha_{2, j-\epsilon}\theta_i^{\gamma-\epsilon}(\beta_{2,  j})x^{\gamma-1-\epsilon}+\sum_{\epsilon=1}^{s-1}\sum_{j=\epsilon}^{s-1}\alpha_{2, \epsilon}\theta_i^\epsilon(\beta_{2, j-\epsilon}x^{\gamma-1+\epsilon})]\xi_{\frac{\gamma}{s}}(x^{s})\\
&=\sum_{i=0}^{\gamma-1}\theta_i^{\gamma-i}(\Delta_k)x^{\gamma-1-i}.
\end{align*}
Thus, $\alpha(x)\circ \beta(x)=0$ if and only if $\Delta_i=0$ for all $0\leq i \leq \gamma-1$.
\end{proof}

\begin{prop}\label{D11}
Let $\alpha(x)=(\alpha_1(x)\vert \alpha_2(x))$ and $\beta(x)=(\beta_1(x)\vert \beta_2(x))$ be two elements in $\mathcal{R}_{r,  s} $ such that $\alpha(x)\circ \beta(x)=0$. Then
\begin{itemize}
\item[(i)] if $\alpha_1(x)=0$ or $\beta_1(x)=0$, then $\alpha_2(x)\vartheta^{\gamma-deg(\beta_2(x))}(\beta_2^\star(x))\equiv 0 (mod(x^s-1))$;\\
\item[(ii)]if $\alpha_2(x)=0$ or $\beta_2(x)=0$, then $\alpha_1(x)\vartheta^{\gamma-deg(\beta_1(x))}(\beta_1^\star(x))\equiv 0 (mod(x^r-1))$.
\end{itemize}
\end{prop}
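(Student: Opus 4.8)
The plan is to exploit the orthogonal idempotent decomposition induced by $v$ and $v^\prime=1-v$, together with the factorizations $x^\gamma-1=(x^s-1)\,\xi_{\frac{\gamma}{s}}(x^s)=(x^r-1)\,\xi_{\frac{\gamma}{r}}(x^r)$ supplied by the preceding lemma (applied to the integer pairs $(s,\gamma/s)$ and $(r,\gamma/r)$). First I would note that, by the very definition of $\circ$, the product $\alpha(x)\circ\beta(x)$ is written as a $v$-part plus a $v^\prime$-part; since $v$ and $v^\prime$ are orthogonal idempotents, the hypothesis $\alpha(x)\circ\beta(x)=0$ in $\mathcal{R}_\gamma$ is equivalent to the separate vanishing of the $v$-part and the $v^\prime$-part in $\mathbb{F}_q[x;\theta_i]/\langle x^\gamma-1\rangle$. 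Thus the whole argument reduces to a computation over $\mathbb{F}_q$ in each component, after which the two conclusions recombine into a single congruence over $\mathrm{R}$ by the Chinese Remainder Theorem. I will carry out (i), the $v^\prime$-component being identical and part (ii) being the mirror image obtained by interchanging the roles of $r$ and $s$ and of the two blocks.

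For (i), suppose $\alpha_1(x)=0$; the alternative hypothesis $\beta_1(x)=0$ forces the reciprocals $\beta_{1v}^\star,\beta_{1v^\prime}^\star$ to vanish and has the same effect. In either case the summands carrying $\xi_{\frac{\gamma}{r}}(x^r)$ disappear, and the $v$-part of the hypothesis becomes
\[
\alpha_{2v}(x)\,\vartheta^{\gamma-\deg(\beta_{2v})}(\beta_{2v}^\star(x))\,x^{\gamma-1-\deg(\beta_{2v})}\,\xi_{\frac{\gamma}{s}}(x^s)\equiv 0 \pmod{x^\gamma-1}.
\]
Abbreviating $Q(x):=\alpha_{2v}(x)\,\vartheta^{\gamma-\deg(\beta_{2v})}(\beta_{2v}^\star(x))\,x^{\gamma-1-\deg(\beta_{2v})}$, and using that $\langle x^\gamma-1\rangle$ is a two-sided ideal (so membership means being a right multiple of its central generator), this reads $Q(x)\,\xi_{\frac{\gamma}{s}}(x^s)=m(x)(x^\gamma-1)$ for some $m(x)\in\mathbb{F}_q[x;\theta_i]$.

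Next I would substitute the right factorization $x^\gamma-1=(x^s-1)\,\xi_{\frac{\gamma}{s}}(x^s)$ to obtain $Q(x)\,\xi_{\frac{\gamma}{s}}(x^s)=m(x)(x^s-1)\,\xi_{\frac{\gamma}{s}}(x^s)$, and then cancel $\xi_{\frac{\gamma}{s}}(x^s)$ on the right. This cancellation is legitimate because $\mathbb{F}_q[x;\theta_i]$ is a domain by Lemma \ref{D1}, so right multiplication by the nonzero element $\xi_{\frac{\gamma}{s}}(x^s)$ is injective; it yields $Q(x)=m(x)(x^s-1)$, i.e. $Q(x)\equiv 0\pmod{x^s-1}$. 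Finally, since $x^s\equiv 1\pmod{x^s-1}$, the trailing factor $x^{\gamma-1-\deg(\beta_{2v})}$ is a unit modulo $x^s-1$ and may be stripped off, leaving $\alpha_{2v}(x)\,\vartheta^{\gamma-\deg(\beta_{2v})}(\beta_{2v}^\star(x))\equiv 0\pmod{x^s-1}$. Running the identical computation on the $v^\prime$-part and recombining the two components gives $\alpha_2(x)\,\vartheta^{\gamma-\deg(\beta_2(x))}(\beta_2^\star(x))\equiv 0\pmod{x^s-1}$, which is (i).

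The main obstacle I anticipate is the noncommutative bookkeeping: one must ensure that ``$\equiv 0\pmod{x^\gamma-1}$'' refers to the two-sided ideal $\langle x^\gamma-1\rangle$ (so that $\mathcal{R}_\gamma$ is genuinely a quotient ring, which holds because the order of $\theta_i$ divides $\gamma$ and hence $x^\gamma-1$ is central), and that the factorization of $x^\gamma-1$ is applied on the side that makes the elimination of $\xi_{\frac{\gamma}{s}}(x^s)$ a \emph{right} cancellation in the domain. A secondary care point is the degree bookkeeping: the definition of $\circ$ uses the component degrees $\deg(\beta_{2v})$ and $\deg(\beta_{2v^\prime})$, so the result is genuinely established component-wise and only afterwards packaged as a single congruence written with $\deg(\beta_2(x))$. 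The degenerate cases $\beta_1(x)=0$ here and $\beta_2(x)=0$ in part (ii) must be handled under the convention that the corresponding reciprocal terms are simply absent, which is exactly what makes the offending summands vanish.
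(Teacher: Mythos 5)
Your argument is correct and follows essentially the same route as the paper's: decompose via the idempotents $v,v^\prime$, reduce to the $\mathbb{F}_q[x;\theta_i]$ components, invoke the factorization $x^\gamma-1=(x^s-1)\xi_{\frac{\gamma}{s}}(x^s)$ from the preceding lemma, cancel $\xi_{\frac{\gamma}{s}}(x^s)$, and strip the unit power of $x$. The only differences are cosmetic: the paper writes out part (ii) and declares (i) similar, whereas you do the reverse, and your explicit appeal to right cancellation in the domain $\mathbb{F}_q[x;\theta_i]$ is a slightly cleaner justification of the step the paper performs by ad hoc manipulation.
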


 \begin{proof}(ii) Assume that $\alpha_2(x)=0$ or $\beta_2(x)=0$.  Then
 \begin{align*}
 \alpha(x)\circ \beta(x)&=(\alpha_{1v}(x)\vartheta^{\gamma-deg(\beta_{1v}(x))}(\beta_{1v}^\star(x))x^{\gamma-1-deg(\beta_{1v}(x))}\xi_{\frac{\gamma}{r}}(x^r))v\\
 &+(\alpha_{1v^\prime}(x)\vartheta^{\gamma-deg(\beta_{1v^\prime}(x))}(\beta_{1v^\prime}^\star(x))x^{\gamma-1-deg(\beta_{1v^\prime}(x))}\xi_{\frac{\gamma}{r}}(x^r))v^\prime=0(mod(x^\gamma-1)).
 \end{align*}
 Thus, there exists a polynomial $f(x)=f_v(x)v+f_{v^\prime}(x)v^\prime \in \mathrm{R}[x; \theta_i]$ such that
 \begin{align*}
 &(\alpha_{1v}(x)\vartheta^{\gamma-deg(\beta_{1v}(x))}(\beta_1^\star(x))x^{\gamma-1-deg(\beta_{1v}(x))}\xi_{\frac{\gamma}{r}}(x^r))v\\
 & +(\alpha_{1v^\prime}(x)\vartheta^{\gamma-deg(\beta_{1v^\prime}(x))}(\beta_1^\star(x))x^{\gamma-1-deg(\beta_{1v^\prime}(x))}\xi_{\frac{\gamma}{r}}(x^r))v^\prime \\
 &=f_v(x)(x^\gamma-1)v+f_{v^\prime}(x)(x^\gamma-1)v^\prime.
 \end{align*}
 Since $\xi_{\frac{\gamma}{r}}=\frac{x^\gamma-1}{x^r-1}$ and $(x^\gamma-1)(x^r-1)=(x^r-1)(x^\gamma-1)$,  one has
 \begin{align*}
 \alpha_{1v}(x)\vartheta^{\gamma-deg(\beta_{1v}(x))}(\beta_{1v}^\star(x))x^\gamma=f_v(x)x^{deg(\beta_{1v}(x))+1}(x^r-1),
 \end{align*} and
  \begin{align*}
 \alpha_{1v^\prime}(x)\vartheta^{\gamma-deg(\beta_{1v^\prime}(x))}(\beta_{1v^\prime}^\star(x))x^\gamma=f_{v^\prime}(x)x^{deg(\beta_{1v^\prime}(x))+1}(x^r-1).
 \end{align*}
 Thus,
  \begin{align*}
 \alpha_{1v}(x)\vartheta^{\gamma-deg(\beta_{1v}(x))}(\beta_{1v}^\star(x))\equiv 0 \hspace{1mm} (mod(x^r-1)),
 \end{align*} and
  \begin{align*}
 \alpha_{1v^\prime}(x)\vartheta^{\gamma-deg(\beta_{1v^\prime}(x))}(\beta_{1v^\prime}^\star(x))\equiv 0 \hspace{1mm} (mod(x^r-1)).
 \end{align*}
The proof of $(i)$ is similar.
 \end{proof}

Let $C$ be a double skew cyclic code of length $(r,  s)$ over $\mathrm{R}$ with the generating matrix given in Theorem \ref{D9}.  Then the parity check matrix of $C$ is permutation equivalent to
	$$H= \left( \begin{array}{rrr|rrr}
	A_{1}^tv & I_kv & 0 & 0 &B_k^tv & B_k^tvE_2^tv \\
	A_2^tv & 0 & I_{deg(g_v(x))-kv} & 0 & B_1^tv & B_1^tvE_2^tv\\
	0 & 0 & 0 & _{deg(h_v(x))} & D_1^tv & E_1^tv+D_1^tvE_2^tv\\
 A_{1}^tv^\prime & I_kv^\prime & 0 & 0 &B_k^tv^\prime & B_k^tv^\prime E_2^tv^\prime \\
	A_2^tv^\prime & 0 & I_{deg(g_{v^\prime(x)})-kv^\prime} & 0 & B_1^tv^\prime & B_1^tv^\prime E_2^tv^\prime\\
	0 & 0 & 0 & _{deg(h_{v^\prime}(x))} & D_1^tv^\prime & E_1^tv^\prime+D_1^tv^\prime E_2^tv^\prime\\
	 \end{array} \right).$$

Now,  the  following proposition holds:

 \begin{prop}
 Let $C=\langle (g(x)\vert 0),  (l(x)\vert h(x))\rangle = \langle (g_v(x)v+g_{v^\prime}(x)v^\prime \vert 0), ( l_v(x)v+l_{v^\prime}(x)v^\prime \vert  h_v(x)v+h_{v^\prime}(x)v^\prime)  \rangle$ be a double skew cyclic code of length $(r,  s)$ over $\mathrm{R}$ with dual code $C^\bot=\langle (\overline{g}(x)\vert 0),  (\overline{l}(x)\vert \overline{h}(x))\rangle = \langle (\overline{g}_v(x)v+\overline{g}_{v^\prime}(x)v^\prime \vert 0), ( \overline{l}_v(x)v+\overline{l}_{v^\prime}(x)v^\prime \vert  \overline{h}_v(x)v+\overline{h}_{v^\prime}(x)v^\prime)  \rangle$. Then\\
 \begin{itemize}
 \item[(i)] $|C_r|=q^{2r-deg(g_v(x))-deg(g_{v^\prime}(x))+k_v+k_{v^\prime}}$ and $|C_s|=q^{2s-deg(h_v(x))-deg(h_{v^\prime}(x))}$.\\
 \item[(ii)] $|(C^\bot)_r|=q^{deg(g_v(x))+deg(g_{v^\prime}(x))}$ and $|(C^\bot)_s|=q^{deg(h_v(x))+deg(h_{v^\prime}(x))+k_v+k_{v^\prime}}$.\\
 \item[(iii)] $|(C_r)^\bot|=q^{deg(g_v(x))+deg(g_{v^\prime}(x))-k_v-k_{v^\prime}}$ and $|(C_s)^\bot)|=q^{deg(h_v(x))+deg(h_{v^\prime}(x))}$.\\
 \end{itemize}
 \end{prop}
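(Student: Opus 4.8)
The plan is to pass through the Chinese Remainder decomposition $\mathrm{R}=v\mathbb{F}_q\oplus v^\prime\mathbb{F}_q$, which converts every cardinality over $\mathrm{R}$ into a pair of $\mathbb{F}_q$-dimension counts. For any $\mathrm{R}$-linear code $D$ one has $D=vD_v\oplus v^\prime D_{v^\prime}$ with $D_v,D_{v^\prime}$ codes over $\mathbb{F}_q$, whence $|D|=|D_v|\,|D_{v^\prime}|=q^{\dim_{\mathbb{F}_q}D_v+\dim_{\mathbb{F}_q}D_{v^\prime}}$. Since the Euclidean inner product splits as $\langle x,y\rangle=v\langle x_v,y_v\rangle+v^\prime\langle x_{v^\prime},y_{v^\prime}\rangle$ (using $v^2=v$, $v^{\prime 2}=v^\prime$, $vv^\prime=0$), both the dual and the puncturing operations act componentwise, i.e., $D^\bot=v(D_v)^\bot\oplus v^\prime(D_{v^\prime})^\bot$ and $(D_r)_v=(D_v)_r$. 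So it is enough to compute each exponent one $\mathbb{F}_q$-component at a time and then add the $v$- and $v^\prime$-contributions.

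For (i) and (ii) I would simply read the relevant $\mathbb{F}_q$-dimensions off the two matrices already constructed. In (i), the $v$-component of $C$ is generated by the standard-form matrix of Theorem \ref{D9}; projecting its columns onto the first $r$ coordinates leaves independent rows carrying $I_{r-deg(g_v(x))}$ and the full-rank $k_v\times k_v$ block $B_k$, so $\dim_{\mathbb{F}_q}(C_r)_v=r-deg(g_v(x))+k_v$, while projecting onto the last $s$ coordinates leaves the identity blocks $I_{k_v}$ and $I_{s-deg(h_v(x))-k_v}$ in disjoint column ranges, so $\dim_{\mathbb{F}_q}(C_s)_v=s-deg(h_v(x))$. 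Summing the $v$- and $v^\prime$-exponents gives the two formulas in (i). For (ii) I would run the same reading on the parity-check matrix $H$, which generates $C^\bot$: its first-$r$-column projection retains $I_{k_v}$ and $I_{deg(g_v(x))-k_v}$, giving $\dim_{\mathbb{F}_q}((C^\bot)_r)_v=deg(g_v(x))$, and its last-$s$-column projection retains $I_{deg(h_v(x))}$ together with a rank-$k_v$ block, giving $\dim_{\mathbb{F}_q}((C^\bot)_s)_v=deg(h_v(x))+k_v$.

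Part (iii) then needs no further matrix manipulation, only the field-level orthogonality identity $\dim_{\mathbb{F}_q}E+\dim_{\mathbb{F}_q}E^\bot=(\text{length of }E)$ applied to the punctured component codes from (i). For the length-$r$ code $(C_r)_v$ this gives $\dim_{\mathbb{F}_q}((C_r)_v)^\bot=r-\bigl(r-deg(g_v(x))+k_v\bigr)=deg(g_v(x))-k_v$, and for the length-$s$ code $(C_s)_v$ it gives $\dim_{\mathbb{F}_q}((C_s)_v)^\bot=s-\bigl(s-deg(h_v(x))\bigr)=deg(h_v(x))$; adding the two components produces exactly the exponents claimed. I would also emphasize that $(C_r)^\bot$ (dualize after puncturing) and $(C^\bot)_r$ (puncture after dualizing) are genuinely different codes, and that the comparison of (ii) with (iii) shows their cardinalities differ by the factor $q^{k_v+k_{v^\prime}}$.

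The step I expect to be the main obstacle is the rank bookkeeping for the projected submatrices, particularly the last-$s$-column projection of $H$ used in (ii). There one must confirm that the two row blocks carrying $B_k^t$ and $B_1^t$ contribute rank exactly $k_v$: writing these blocks as $\bigl(\begin{smallmatrix}B_k^t\\ B_1^t\end{smallmatrix}\bigr)(I_{k_v}\mid E_2^t)$, the claim follows because $B_k$ being a full-rank $k_v\times k_v$ block forces the left factor to have column rank $k_v$ while $(I_{k_v}\mid E_2^t)$ has row rank $k_v$, so the product has rank $k_v$ and is independent of the $I_{deg(h_v(x))}$ block occupying a disjoint column range. Alongside this, one must verify that the column widths of $G$ and $H$ genuinely partition $r$ and $s$, namely $r=(r-deg(g_v(x)))+k_v+(deg(g_v(x))-k_v)$ and $s=deg(h_v(x))+k_v+(s-deg(h_v(x))-k_v)$, since the whole count rests on the surviving identity blocks sitting in disjoint coordinate ranges.
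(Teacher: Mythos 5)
Your argument is correct, and it follows exactly the route the paper intends: the paper states this proposition without proof, immediately after displaying the parity-check matrix $H$, precisely because the exponents are meant to be read off as ranks of the projected blocks of $G$ and $H$ in each CRT component, which is what you carry out (including the one nontrivial rank check, that the $B_k^t$ and $B_1^t$ row blocks of $H$ contribute rank exactly $k_v$ after projecting onto the last $s$ coordinates). Your explicit justification that duality and puncturing both commute with the $v$/$v^\prime$ decomposition, and the dimension audit $r=(r-\deg(g_v(x)))+k_v+(\deg(g_v(x))-k_v)$, supply details the paper leaves implicit.
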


 \begin{coll}\label{D12}
 Let $C=\langle (g(x)\vert 0),  (l(x)\vert h(x))\rangle = \langle (g_v(x)v+g_{v^\prime}(x)v^\prime \vert 0), ( l_v(x)v+l_{v^\prime}(x)v^\prime \vert  h_v(x)v+h_{v^\prime}(x)v^\prime)  \rangle$ be a double skew cyclic code of length $(r,  s)$ over $\mathrm{R}$ with dual code $C^\bot=\langle (\overline{g}(x)\vert 0),  (\overline{l}(x)\vert \overline{h}(x))\rangle = \langle (\overline{g}_v(x)v+\overline{g}_{v^\prime}(x)v^\prime \vert 0), ( \overline{l}_v(x)v+\overline{l}_{v^\prime}(x)v^\prime \vert  \overline{h}_v(x)v+\overline{h}_{v^\prime}(x)v^\prime)  \rangle$. Then\\
\begin{itemize}
 \item[(i)]$deg(\overline{g}_v(x))=r-deg(gcd(g_v(x), l_v(x)))$.\\
  \item[(ii)]$deg(\overline{g}_{v^\prime}(x))=r-deg(gcd(g_{v^\prime}(x), l_{v^\prime}(x)))$.\\
 \item[(iii)]  $deg(\overline{h}_v(x))=s-deg(h_v(x))-deg(g_v(x))+deg(gcd(g_v(x),  l_v(x))).$\\
\item[(iv)]  $deg(\overline{h}_{v^\prime}(x))=s-deg(h_{v^\prime}(x))-deg(g_{v^\prime}(x))+deg(gcd(g_{v^\prime}(x),  l_{v^\prime}(x))).$
\end{itemize}
 \end{coll}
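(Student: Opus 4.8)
The plan is to read each degree off from the cardinality of an appropriately chosen component code of $C$ and $C^\bot$, using the sizes furnished by the preceding Proposition together with the fact that $C^\bot$ is again a double skew cyclic code admitting the structural decomposition of Theorem \ref{D3}. Since $\mathrm{R}=v\mathbb{F}_q\oplus v^\prime\mathbb{F}_q$ by the Chinese Remainder Theorem, the code splits as $C=vC_v\oplus v^\prime C_{v^\prime}$ and, because the inner product over $\mathrm{R}$ separates along the orthogonal idempotents $v,v^\prime$, correspondingly $C^\bot=v(C_v)^\bot\oplus v^\prime(C_{v^\prime})^\bot$. Hence each exponent appearing in the Proposition splits as a sum of a $v$-contribution and a $v^\prime$-contribution, and I may compare the two contributions separately; by the $v\leftrightarrow v^\prime$ symmetry it then suffices to prove $(i)$ and $(iii)$, since $(ii)$ and $(iv)$ are obtained by interchanging $v$ and $v^\prime$ throughout.

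For $(i)$ I would first identify the subcode $\{(a(x)\mid 0)\in C^\bot\}$. Because $\langle (a\mid 0),(c_1\mid c_2)\rangle=\langle a,c_1\rangle$, a word $(a\mid 0)$ lies in $C^\bot$ precisely when $a$ is orthogonal to every first component occurring in a codeword of $C$, that is, when $a\in (C_r)^\bot$; thus this subcode is the embedded copy of $(C_r)^\bot$. On the other hand, it is exactly the kernel of the projection $\sigma\colon C^\bot\to\mathcal{R}_s$, which by Theorem \ref{D3} applied to $C^\bot$ equals $\langle(\overline{g}(x)\mid 0)\rangle$; its $v$-part is the skew cyclic code $\langle\overline{g}_v\rangle\subseteq\mathcal{R}_r$ and so has $q^{\,r-deg(\overline{g}_v)}$ elements. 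Comparing this with the $v$-contribution of $|(C_r)^\bot|=q^{deg(g_v)+deg(g_{v^\prime})-k_v-k_{v^\prime}}$ (part $(iii)$ of the Proposition) forces $r-deg(\overline{g}_v)=deg(g_v)-k_v=deg(gcd(g_v,l_v))$, which is $(i)$.

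For $(iii)$ I would instead use the image of the same projection $\sigma\colon C^\bot\to\mathcal{R}_s$. This image is the punctured code $(C^\bot)_s$, and by Theorem \ref{D3} applied to $C^\bot$ it equals $\langle\overline{h}(x)\rangle$, whose $v$-part $\langle\overline{h}_v\rangle$ has $q^{\,s-deg(\overline{h}_v)}$ elements. Comparing with the $v$-contribution of $|(C^\bot)_s|=q^{deg(h_v)+deg(h_{v^\prime})+k_v+k_{v^\prime}}$ (part $(ii)$ of the Proposition) gives $s-deg(\overline{h}_v)=deg(h_v)+k_v$, and substituting $k_v=deg(g_v)-deg(gcd(g_v,l_v))$ yields $(iii)$.

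The only genuinely delicate point is the identification of these two distinguished component codes of $C^\bot$ — recognizing $\{(a\mid 0)\in C^\bot\}$ as the embedded copy of $(C_r)^\bot$ and recognizing $(C^\bot)_s$ as $\langle\overline{h}\rangle$ — and matching them against the correct rows of the standard generator and parity-check forms preceding the statement. Once these identifications are secured, together with the standard fact that a skew cyclic code $\langle p\rangle\subseteq\mathbb{F}_q[x;\theta_i]/\langle x^m-1\rangle$ with $p\mid_r x^m-1$ contains $q^{\,m-deg(p)}$ words, every degree is forced by a direct comparison of exponents and no further computation is required.
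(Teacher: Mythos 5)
The paper states this corollary (and the cardinality proposition immediately preceding it) without any proof, so there is no argument of the authors' to compare against line by line; your derivation is correct and is evidently the route the placement of the statement intends, namely identifying $\ker(\sigma|_{C^\bot})$ with the embedded copy of $(C_r)^\bot$ and $\mathrm{Im}(\sigma|_{C^\bot})$ with $(C^\bot)_s=\langle \overline{h}(x)\rangle$ via Theorem \ref{D3} applied to $C^\bot$, and then reading the degrees off the cardinalities. The one step that genuinely needs care is that the proposition records only total cardinalities such as $|(C_r)^\bot|=q^{\deg(g_v(x))+\deg(g_{v^\prime}(x))-k_v-k_{v^\prime}}$ while you must match the $v$- and $v^\prime$-parts separately; your appeal to the idempotent splitting $C=vC_v\oplus v^\prime C_{v^\prime}$, together with the paper's own identification of $(C_r)_v$ as the skew cyclic code generated by $\gcd(g_v(x),l_v(x))$ in the proof of Theorem \ref{D9}, legitimately supplies the componentwise exponents, so the comparison goes through.
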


 \begin{theo}
Let $C=\langle (g(x)\vert 0),  (l(x)\vert h(x))\rangle = \langle (g_v(x)v+g_{v^\prime}(x)v^\prime \vert 0), ( l_v(x)v+l_{v^\prime}(x)v^\prime \vert  h_v(x)v+h_{v^\prime}(x)v^\prime)  \rangle$ be a double skew cyclic code of length $(r,  s)$ over $\mathrm{R}$ with dual code $C^\bot=\langle (\overline{g}(x)\vert 0),  (\overline{l}(x)\vert \overline{h}(x))\rangle = \langle (\overline{g}_v(x)v+\overline{g}_{v^\prime}(x)v^\prime \vert 0), ( \overline{l}_v(x)v+\overline{l}_{v^\prime}(x)v^\prime \vert  \overline{h}_v(x)v+\overline{h}_{v^\prime}(x)v^\prime)  \rangle$. Then $$\overline{g}(x)=\frac{x^r-1}{gcd(\vartheta^{\gamma-deg(g_v(x))}(g_v^\star(x)), \vartheta^{\gamma-deg(l_v(x))}(l_v^\star(x)))}v+\frac{x^r-1}{gcd(\vartheta^{\gamma-deg(g_{v^\prime}(x))}(g_{v^\prime}^\star(x)), \vartheta^{\gamma-deg(l_{v^\prime}(x))}(l_{v^\prime}^\star(x)))}v^\prime.$$
 \end{theo}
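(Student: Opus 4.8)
The plan is to exploit the idempotent decomposition $\mathrm{R}=v\mathrm{R}\oplus v^\prime\mathrm{R}$ coming from the Chinese Remainder Theorem: every object over $\mathrm{R}$ splits into a $v$-part and a $v^\prime$-part living in $\mathbb{F}_q[x;\theta_i]$, and the two parts are mutually orthogonal under the inner product. Consequently it suffices to establish the single identity $\overline{g}_v(x)=\frac{x^r-1}{\gcd(\vartheta^{\gamma-\deg(g_v(x))}(g_v^\star(x)),\,\vartheta^{\gamma-\deg(l_v(x))}(l_v^\star(x)))}$ in the $v$-component, since the $v^\prime$-component is verbatim the same with $v$ replaced by $v^\prime$. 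Write $G_v^\star:=\vartheta^{\gamma-\deg(g_v(x))}(g_v^\star(x))$, $L_v^\star:=\vartheta^{\gamma-\deg(l_v(x))}(l_v^\star(x))$ and $D_v:=\gcd(G_v^\star,L_v^\star)$, so that the claim becomes $\overline{g}_v(x)=(x^r-1)/D_v$.

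First I would identify what $\overline{g}_v(x)$ actually generates. By Theorem \ref{D3}, the pure-first-component part $(\overline{g}(x)\vert 0)$ of $C^\bot$ generates the kernel of the projection of $C^\bot$ onto $\mathcal{R}_s$; let $S_v$ be its $v$-part, a skew cyclic code of length $r$ over $\mathbb{F}_q$. The key reduction is that, since the $s$-block of $(\overline{g}_v(x)\vert 0)$ vanishes, orthogonality of $(\overline{g}_v(x)\vert 0)$ against a codeword of $C$ depends only on the first coordinates; hence $S_v$ is exactly the Euclidean dual of the punctured code $(C_r)_v$. By the proof of Theorem \ref{D9}, $(C_r)_v=\langle\gcd(g_v(x),l_v(x))\rangle$, so $S_v=\big((C_r)_v\big)^\bot$ is the dual of a single skew cyclic code of length $r$, and $\overline{g}_v(x)$ is its monic generator.

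Next I would make the orthogonality condition explicit. Using the symmetry of the Euclidean inner product over the commutative ring $\mathrm{R}$ together with Proposition \ref{D10}, membership $(\overline{g}_v(x)\vert 0)\in C^\bot$ is equivalent to $(g_v(x)\vert 0)\circ(\overline{g}_v(x)\vert 0)=0$ and $(l_v(x)\vert h_v(x))\circ(\overline{g}_v(x)\vert 0)=0$. Because the right-hand argument has vanishing $s$-component, Proposition \ref{D11}(ii) converts these into the congruences $g_v(x)\,\vartheta^{\gamma-\deg(\overline{g}_v)}(\overline{g}_v^\star(x))\equiv 0$ and $l_v(x)\,\vartheta^{\gamma-\deg(\overline{g}_v)}(\overline{g}_v^\star(x))\equiv 0$ modulo $x^r-1$. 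I would then check that the candidate $(x^r-1)/D_v$ satisfies both: as $D_v$ right-divides each of $G_v^\star$ and $L_v^\star$, the skew-reciprocal identities propagate the divisibility by $x^r-1$ into the two congruences. A degree count then closes the argument, since $\deg\big((x^r-1)/D_v\big)=r-\deg D_v$ and skew reciprocation together with the degree-preserving $\vartheta$ give $\deg D_v=\deg\gcd(g_v(x),l_v(x))$; this matches $\deg(\overline{g}_v(x))=r-\deg\gcd(g_v(x),l_v(x))$ from Corollary \ref{D12}(i). Hence $(x^r-1)/D_v$ is a member of $S_v$ of the correct degree, so it is the monic generator $\overline{g}_v(x)$.

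The main obstacle is the bookkeeping of the skew reciprocal. Unlike the commutative case, $f\mapsto f^\star$ does not commute with $\theta_i$, which is precisely why the statement carries the twists $\vartheta^{\gamma-\deg(\cdot)}$. The delicate step is to prove the two reciprocal identities on which the verification rests: that $\gcd(G_v^\star,L_v^\star)$ equals, up to the prescribed $\vartheta$-twist, the skew reciprocal of $\gcd(g_v(x),l_v(x))$, and that the skew dual of $\langle d(x)\rangle$ with $d(x)\vert_r(x^r-1)$ is generated by $(x^r-1)/d^\star$ after the same normalization. These are established most cleanly via the skew product rule $(a(x)b(x))^\star=\vartheta^{\deg b}(a^\star(x))\,b^\star(x)$ applied to the factorization $x^r-1=k(x)\gcd(g_v(x),l_v(x))$; once the twist-tracking is pinned down, the remaining computations are routine and the $v^\prime$-component follows identically.
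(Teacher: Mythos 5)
Your proposal is correct and matches the paper's proof in all essentials: both derive orthogonality congruences from Propositions \ref{D10} and \ref{D11}, split into the $v$- and $v^\prime$-components, and close the argument with the degree count supplied by Corollary \ref{D12}. The one real difference is that you place $(\overline{g}(x)\vert 0)$ in the second slot of $\circ$, so the skew reciprocal lands on $\overline{g}$ and you must then verify the candidate through the reciprocal bookkeeping you flag as delicate; the paper uses the reverse order, obtaining $\overline{g}(x)\vartheta^{\gamma-\deg(g(x))}(g^\star(x))\equiv 0$ and $\overline{g}(x)\vartheta^{\gamma-\deg(l(x))}(l^\star(x))\equiv 0 \pmod{x^r-1}$ directly, which makes the $\gcd$ of the starred generators appear in exactly the form of the stated formula and avoids that extra step.
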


 \begin{proof} By the generators of dual code $C^\bot$,  $(\overline{g}(x)\vert0)\in C^\bot$.  By Proposition \ref{D10}, we have $$(\overline{g}(x)\vert0)\circ ( g(x)\vert0) \equiv 0 (mod(x^\gamma-1)),$$  $$(\overline{g}(x)\vert0)\circ ( l(x)\vert h(x)) \equiv 0 (mod(x^\gamma-1)).$$ Hence, by Proposition \ref{D11}, $$\overline{g}(x)\vartheta^{\gamma-deg(g(x)
 )}(g^\star(x))\equiv 0(mod(x^r-1)),$$ $$\overline{g}(x)\vartheta^{\gamma-deg(l(x))}(l^\star(x))\equiv 0(mod(x^r-1)),$$  if and only if $$(x^r-1)\vert \overline{g}(x)\vartheta^{\gamma-deg(g(x)
 )}(g^\star(x)),$$  $$(x^r-1)\vert \overline{g}(x)\vartheta^{\gamma-deg(l(x))}(l^\star(x)),$$ respectively. Thus, we have $$(x^r-1)\vert \overline{g}(x)gcd(\vartheta^{\gamma-deg(g(x) )}(g^\star(x)),  \vartheta^{\gamma-deg(l(x))}(l^\star(x)))$$
 if and only if
 $$(x^r-1)\vert \overline{g}_v(x)gcd(\vartheta^{\gamma-deg(g_v(x) )}(g_v^\star(x)),  \vartheta^{\gamma-deg(l_v(x))}(l_v^\star(x)))\text{ and}$$

 $$(x^r-1)\vert \overline{g}_{v^\prime}(x)gcd(\vartheta^{\gamma-deg(g_{v^\prime}(x) )}(g_{v^\prime}^\star(x)),  \vartheta^{\gamma-deg(l_{v^\prime}(x))}(l_{v^\prime}^\star(x))).$$

 This implies that $$\overline{g}_v(x)gcd(\vartheta^{\gamma-deg(g_v(x) )}(g_v^\star(x)),  \vartheta^{\gamma-deg(l_v(x))}(l_v^\star(x)))\equiv 0 (mod(x^r-1)),$$

 $$\overline{g}_{v^\prime}(x)gcd(\vartheta^{\gamma-deg(g_{v^\prime}(x) )}(g_{v^\prime}^\star(x)),  \vartheta^{\gamma-deg(l_{v^\prime}(x))}(l_{v^\prime}^\star(x)))\equiv 0 (mod(x^r-1)).$$

 Therefore,  there exists $f(x)=f_v(x)v+f_{v^\prime}(x)v^\prime\in \mathrm{R}[x;\theta_i]$ such that

 \begin{align*}
   &  \overline{g}_v(x) gcd(\vartheta^{\gamma-deg(g_v(x) )}  (g_v^\star(x)),  \vartheta^{\gamma-deg(l_v(x))}(l_v^\star(x)))v \\
 &+\overline{g}_{v^\prime}(x)gcd(\vartheta^{\gamma-deg(g_{v^\prime}(x) )}(g^\star(x)),  \vartheta^{\gamma-deg(l_{v^\prime}(x))}(l_{v^\prime}^\star(x)))v^\prime\\
 &=f_v(x)v+f_{v^\prime}(x)v^\prime(mod(x^r-1)).
\end{align*}

 It follows from Corollary \ref{D12} that $f_v(x)=f_{v^\prime}(x)=1$ and proof holds.
 \end{proof}

  \begin{theo}
Let $C=\langle (g(x)\vert 0),  (l(x)\vert h(x))\rangle = \langle (g_v(x)v+g_{v^\prime}(x)v^\prime \vert 0), ( l_v(x)v+l_{v^\prime}(x)v^\prime \vert  h_v(x)v+h_{v^\prime}(x)v^\prime)  \rangle$ be a double skew cyclic code of length $(r,  s)$ over $\mathrm{R}$ with dual code $C^\bot=\langle (\overline{g}(x)\vert 0),  (\overline{l}(x)\vert \overline{h}(x))\rangle = \langle (\overline{g}_v(x)v+\overline{g}_{v^\prime}(x)v^\prime \vert 0), ( \overline{l}_v(x)v+\overline{l}_{v^\prime}(x)v^\prime \vert  \overline{h}_v(x)v+\overline{h}_{v^\prime}(x)v^\prime)  \rangle$. Then  $$\overline{h}(x)=\frac{x^s-1}{\vartheta^{\gamma-deg(j(x))}(j^\star(x))}=\frac{x^s-1}{\vartheta^{\gamma-deg(j_v(x))}(j_v^\star(x))}v+\frac{x^s-1}{\vartheta^{\gamma-deg(j_{v^\prime}(x))}(j_{v^\prime}^\star(x))}v^\prime,$$  where $j_v(x)=\frac{lcm(g_v(x),  l_v(x))h_v(x)}{l_v(x)}$ and $j_{v^\prime}(x)=\frac{lcm(g_{v^\prime}(x),  l_{v^\prime}(x))h_{v^\prime}(x)}{l_{v^\prime}(x)}$.
 \end{theo}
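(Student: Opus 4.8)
The plan is to determine $\overline{h}_v(x)$ and $\overline{h}_{v^\prime}(x)$ separately by manufacturing a specific element of $C$ whose first component is zero and whose second component is controlled by $j_v(x)$ and $j_{v^\prime}(x)$, and then testing it against the dual generator $(\overline{l}(x)\vert\overline{h}(x))$. Throughout I would exploit the idempotent splitting $a = av + av^\prime$, so that every claim reduces to two parallel statements over $\mathbb{F}_q[x;\theta_i]$, one carried by $v$ and one by $v^\prime$. First I would construct an element $(0\vert j(x))\in C$ with $j(x)=j_v(x)v+j_{v^\prime}(x)v^\prime$. Working in the $v$-component, I write $lcm(g_v(x),l_v(x))=q_v(x)l_v(x)$ (legitimate since $l_v(x)\vert_r lcm(g_v(x),l_v(x))$), so that acting by $q_v(x)v$ on the generator $(l(x)\vert h(x))$ yields $(lcm(g_v(x),l_v(x))v\vert j_v(x)v)\in C$ with $j_v(x)=q_v(x)h_v(x)=\frac{lcm(g_v(x),l_v(x))h_v(x)}{l_v(x)}$. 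Because $g_v(x)\vert_r lcm(g_v(x),l_v(x))$, the element $(lcm(g_v(x),l_v(x))v\vert 0)$ lies in $\langle (g_v(x)v\vert 0)\rangle\subseteq C$; subtracting it leaves $(0\vert j_v(x)v)\in C$. Repeating in the $v^\prime$-component and adding gives $(0\vert j(x))\in C$.

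Next I would extract an annihilation relation. Since $C^\bot$ is itself skew cyclic, every $T_{\theta_i}$-shift of $(\overline{l}(x)\vert\overline{h}(x))$ again lies in $C^\bot$ and is therefore orthogonal to $(0\vert j(x))\in C$. Proposition \ref{D10} then forces $(\overline{l}(x)\vert\overline{h}(x))\circ(0\vert j(x))=0$, and because the first component of $(0\vert j(x))$ vanishes, Proposition \ref{D11}(i) yields $\overline{h}(x)\,\vartheta^{\gamma-deg(j(x))}(j^\star(x))\equiv 0\ (mod(x^s-1))$. Reading this off in each idempotent slot produces the two congruences $\overline{h}_v(x)\,\vartheta^{\gamma-deg(j_v(x))}(j_v^\star(x))\equiv 0\ (mod(x^s-1))$ and $\overline{h}_{v^\prime}(x)\,\vartheta^{\gamma-deg(j_{v^\prime}(x))}(j_{v^\prime}^\star(x))\equiv 0\ (mod(x^s-1))$.

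Finally a degree count closes the argument. The skew-polynomial identity $deg(lcm(g_v(x),l_v(x)))+deg(gcd(g_v(x),l_v(x)))=deg(g_v(x))+deg(l_v(x))$ gives $deg(j_v(x))=deg(g_v(x))+deg(h_v(x))-deg(gcd(g_v(x),l_v(x)))$, which by Corollary \ref{D12}(iii) equals exactly $s-deg(\overline{h}_v(x))$. Since $\vartheta$ and the reciprocal $\star$ preserve degree (the factors involved are divisors of $x^s-1$ and hence have nonzero constant term), the product $\overline{h}_v(x)\,\vartheta^{\gamma-deg(j_v(x))}(j_v^\star(x))$ has degree exactly $s$; combined with divisibility by $x^s-1$ and monicity this forces $\overline{h}_v(x)\,\vartheta^{\gamma-deg(j_v(x))}(j_v^\star(x))=x^s-1$, i.e. $\overline{h}_v(x)=\frac{x^s-1}{\vartheta^{\gamma-deg(j_v(x))}(j_v^\star(x))}$. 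The identical computation in the $v^\prime$-component and the recombination $\overline{h}(x)=\overline{h}_v(x)v+\overline{h}_{v^\prime}(x)v^\prime$ then give the stated formula.

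The hard part will be the first step: realizing $(0\vert j(x))$ inside $C$ with the correct noncommutative quotient. One must check that $j_v(x)=\frac{lcm(g_v(x),l_v(x))h_v(x)}{l_v(x)}$ is genuinely the left quotient $q_v(x)h_v(x)$ arising from $lcm(g_v(x),l_v(x))=q_v(x)l_v(x)$ (rather than a naive two-sided fraction), and that $g_v(x)\vert_r lcm(g_v(x),l_v(x))$ really places $(lcm(g_v(x),l_v(x))v\vert 0)$ in $\langle (g_v(x)v\vert 0)\rangle$, so the subtraction is valid. The remaining delicate bookkeeping is tracking the $\vartheta$-twists, the shift factors $x^{\gamma-1-deg(\cdot)}$, and the reciprocal $\star$ in the definition of $\circ$, together with confirming that reciprocation preserves the relevant degrees; once $(0\vert j(x))\in C$ is secured, Propositions \ref{D10}--\ref{D11} and Corollary \ref{D12} make the rest routine.
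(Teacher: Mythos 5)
Your proposal is correct and follows essentially the same route as the paper's proof: construct $(0\vert j(x))\in C$, apply Propositions \ref{D10} and \ref{D11} to get $\overline{h}(x)\vartheta^{\gamma-deg(j(x))}(j^\star(x))\equiv 0 \pmod{x^s-1}$, and then force the cofactor to be $1$ by the degree count coming from Corollary \ref{D12} and the identity $deg(lcm)+deg(gcd)=deg(g)+deg(l)$. If anything, your first step is slightly more careful than the paper's, which cancels the first component by subtracting $k(x)(g(x)\vert 0)$ with the \emph{same} multiplier $k(x)=\frac{lcm(g(x),l(x))}{l(x)}$ (whose product with $g(x)$ need not equal $lcm(g(x),l(x))$), whereas you correctly invoke $g_v(x)\vert_r lcm(g_v(x),l_v(x))$ to place $(lcm(g_v(x),l_v(x))v\vert 0)$ in $\langle (g_v(x)v\vert 0)\rangle$ before subtracting.
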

 \begin{proof} Let  $(0|j(x))=k(x)(l(x)|h(x))-k(x)(g(x)|0)$,  where $k(x)=\frac{lcm(g(x),l(x))}{l(x)}$,  then $(0|j(x))\in C$.  By Proposition \ref{D10}, we have
$$(\overline{l}(x)\vert \overline{h}(x))\circ (0\vert j(x)) \equiv 0 (mod(x^\gamma-1)).$$ By Proposition \ref{D11}, there exists $f(x)\in \mathrm{R}[x; \theta_i]$ such that $$\overline{h}(x)\vartheta^{\gamma-deg(j(x))}(j^\star(x))=f(x)(x^s-1).$$ Also, by \cite{A8}  and corollary  \ref{D12},
\begin{align*}
deg(\overline{h}_v(x))&=s-deg(h_v(x))-deg(g_v(x))+deg(gcd(g_v(x), l_v(x)))\\
&=s-deg(h_v(x))+deg(l_v(x))-deg(lcm(g_v(x), l_v(x))),
\end{align*}
and
\begin{align*}
deg(\overline{h}_{v^\prime}(x))&=s-deg(h_{v^\prime}(x))-deg(g_{v^\prime}(x))+deg(gcd(g_{v^\prime}(x), l_{v^\prime}(x)))\\
&=s-deg(h_{v^\prime}(x))+deg(l_{v^\prime}(x))-deg(lcm(g_{v^\prime}(x), l_{v^\prime}(x))).
\end{align*}
Hence,
\begin{align*}
deg(x^s-1)=s&=deg(\overline{h}_v(x)\vartheta^{\gamma-deg(j_v(x))}(j_v^\star(x)))\\
&=deg(f_v(x)(x^s-1)),
\end{align*}
and
\begin{align*}
deg(x^s-1)=s&=deg(\overline{h}_{v^\prime}(x)\vartheta^{\gamma-deg(j_{v^\prime}(x))}(j_{v^\prime}^\star(x)))\\
&=deg(f_{v^\prime}(x)(x^s-1)).
\end{align*}
So,  $f_v(x)=f_{v^\prime}(x)=1$ and proof holds.
 \end{proof}

   \begin{theo}
Let $C=\langle (g(x)\vert 0),  (l(x)\vert h(x))\rangle = \langle (g_v(x)v+g_{v^\prime}(x)v^\prime \vert 0), ( l_v(x)v+l_{v^\prime}(x)v^\prime \vert  h_v(x)v+h_{v^\prime}(x)v^\prime)  \rangle$ be a double skew cyclic code of length $(r,  s)$ over $\mathrm{R}$ with dual code $C^\bot=\langle (\overline{g}(x)\vert 0),  (\overline{l}(x)\vert \overline{h}(x))\rangle = \langle (\overline{g}_v(x)v+\overline{g}_{v^\prime}(x)v^\prime \vert 0), ( \overline{l}_v(x)v+\overline{l}_{v^\prime}(x)v^\prime \vert  \overline{h}_v(x)v+\overline{h}_{v^\prime}(x)v^\prime)  \rangle$.  Then,  $$\overline{l}(x)=(\frac{(x^r-1)}{\vartheta^{\gamma-deg(g_v(x))}(g_v^\star(x))}v+\frac{(x^r-1)}{\vartheta^{\gamma-deg(g_{v^\prime}(x))}(g_{v^\prime}^\star(x))}v^\prime)\eta(x),$$  where $\eta(x)\in \mathrm{R}[x; \theta_i]$.
 \end{theo}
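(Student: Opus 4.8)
The plan is to mirror the strategy already used for $\overline{g}(x)$ and $\overline{h}(x)$, but to isolate the condition on $\overline{l}(x)$ by pairing the dual generator $(\overline{l}(x)\vert\overline{h}(x))$ against the generator $(g(x)\vert 0)$ of $C$, whose vanishing second component will conveniently eliminate $\overline{h}(x)$ from the resulting relation. First I would note that since $(\overline{l}(x)\vert\overline{h}(x))\in C^\bot$, this vector together with all its $\theta_i$-shifts is orthogonal to $(g(x)\vert 0)\in C$. By Proposition \ref{D10} this is equivalent to $(\overline{l}(x)\vert\overline{h}(x))\circ(g(x)\vert 0)\equiv 0\ (\mathrm{mod}\ (x^\gamma-1))$. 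Because the second component of $(g(x)\vert 0)$ is zero, Proposition \ref{D11}(ii) applies and yields
\[
\overline{l}(x)\,\vartheta^{\gamma-deg(g(x))}(g^\star(x))\equiv 0\ (\mathrm{mod}\ (x^r-1)).
\]
Splitting this congruence through the idempotents $v$ and $v^\prime$ by the Chinese Remainder Theorem then gives the two $\mathbb{F}_q[x;\theta_i]$ divisibility relations $(x^r-1)\vert_r\,\overline{l}_v(x)\vartheta^{\gamma-deg(g_v(x))}(g_v^\star(x))$ and $(x^r-1)\vert_r\,\overline{l}_{v^\prime}(x)\vartheta^{\gamma-deg(g_{v^\prime}(x))}(g_{v^\prime}^\star(x))$.

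Next I would convert each divisibility into the claimed factorisation. Since $g_v(x)\vert_r(x^r-1)$ by Theorem \ref{D3}, and since $x^r-1$ has nonzero constant term so that every divisor does as well, the skew reciprocal $\vartheta^{\gamma-deg(g_v(x))}(g_v^\star(x))$ has degree exactly $deg(g_v(x))$ and is again a divisor of $x^r-1$; hence the quotient $\frac{x^r-1}{\vartheta^{\gamma-deg(g_v(x))}(g_v^\star(x))}$ is a genuine polynomial of degree $r-deg(g_v(x))$. Writing $x^r-1$ as the product of this quotient with $\vartheta^{\gamma-deg(g_v(x))}(g_v^\star(x))$ and substituting into the divisibility relation, I can cancel the common factor $\vartheta^{\gamma-deg(g_v(x))}(g_v^\star(x))$ using that $\mathrm{R}[x;\theta_i]$ is a domain (Lemma \ref{D1}). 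This forces $\frac{x^r-1}{\vartheta^{\gamma-deg(g_v(x))}(g_v^\star(x))}$ to be a factor of $\overline{l}_v(x)$, so that $\overline{l}_v(x)=\frac{x^r-1}{\vartheta^{\gamma-deg(g_v(x))}(g_v^\star(x))}\eta_v(x)$ for some $\eta_v(x)\in\mathbb{F}_q[x;\theta_i]$, and symmetrically $\overline{l}_{v^\prime}(x)=\frac{x^r-1}{\vartheta^{\gamma-deg(g_{v^\prime}(x))}(g_{v^\prime}^\star(x))}\eta_{v^\prime}(x)$. Recombining by CRT with $\eta(x)=\eta_v(x)v+\eta_{v^\prime}(x)v^\prime\in\mathrm{R}[x;\theta_i]$ then produces exactly the stated expression for $\overline{l}(x)$. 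Note that, unlike the arguments for $\overline{g}(x)$ and $\overline{h}(x)$, the cofactor $\eta(x)$ is \emph{not} pinned down to be $1$ here, since orthogonality against $(g(x)\vert 0)$ alone is only a necessary condition; this is why the statement records merely the left-multiple structure rather than a closed formula.

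The degree bookkeeping is consistent with this: by Corollary \ref{D12} one has $deg(\overline{g}_v(x))=r-deg(gcd(g_v(x),l_v(x)))\ge r-deg(g_v(x))$, which is precisely the degree of $\frac{x^r-1}{\vartheta^{\gamma-deg(g_v(x))}(g_v^\star(x))}$, leaving room for the bound $deg(\overline{l}(x))<deg(\overline{g}(x))$ already recorded in the dual theorem. I expect the main obstacle to lie entirely in the noncommutative cancellation step: one must verify that $\vartheta^{\gamma-deg(g_v(x))}(g_v^\star(x))$, being the skew reciprocal of a right divisor of $x^r-1$, is itself a divisor of $x^r-1$ on the side that places the cofactor $\frac{x^r-1}{\vartheta^{\gamma-deg(g_v(x))}(g_v^\star(x))}$ as a \emph{left} factor of $\overline{l}_v(x)$, as the statement requires, and that the passage from $(x^r-1)\vert_r\,\overline{l}_v(x)\vartheta^{\gamma-deg(g_v(x))}(g_v^\star(x))$ to a factorisation of $\overline{l}_v(x)$ genuinely invokes the domain property of $\mathrm{R}[x;\theta_i]$ rather than a naive commutative division. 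Establishing that the reciprocal of a divisor of $x^r-1$ is again a divisor of $x^r-1$ is the technical lemma underpinning the whole argument, and it is the only point where care about the ordering of skew factors is indispensable.
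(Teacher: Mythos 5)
Your proposal follows essentially the same route as the paper: pair the dual generator $(\overline{l}(x)\vert\overline{h}(x))$ against $(g(x)\vert 0)$, invoke Propositions \ref{D10} and \ref{D11}(ii) to obtain $\overline{l}(x)\vartheta^{\gamma-deg(g(x))}(g^\star(x))\equiv 0 \ (\mathrm{mod}\ (x^r-1))$, and then read off the factorisation of $\overline{l}(x)$. In fact you go slightly further than the paper, which stops at the relation $\overline{l}(x)\vartheta^{\gamma-deg(g(x))}(g^\star(x))=\eta(x)(x^r-1)$ without carrying out the division or addressing the left/right ordering of the skew cofactor that you correctly identify as the delicate point.
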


 \begin{proof} Let $\overline{c}(x)\in \langle (\overline{g}(x)\vert 0), (\overline{l}(x)\vert \overline{h}(x)) $,  so $\overline{c}(x)\in C^\bot$.  $$\overline{c}(x)\circ (g(x)\vert 0)=\langle (\overline{g}(x)\vert0)\circ (g(x)\vert 0),  (\overline{l}(x)\vert \overline{h}(x))  \circ (g(x)\vert 0) \rangle .$$  So $$\overline{l}(x)\vartheta^{\gamma-deg(g(x))}(g^\star(x))\equiv 0 (mod(x^r-1)).$$ Thus,  $$\overline{l}(x)\vartheta^{\gamma-deg(g(x))}(g^\star(x))=\eta(x)(x^r-1),$$  for some skew polynomial $\eta(x)\in \mathrm{R}[x; \theta_i]$. 

 \end{proof}

   \section{Computational Results and Optimal Codes}

This section introduces a new construction using the generator matrix of double-skew cyclic codes $C$. As an application of our discussion, we present essential examples of double skew cyclic codes of length $(r,  s)$ over the ring $\mathrm{R}$. All the obtained codes have optimal parameters according to the database \cite{A18}. Also, by considering new construction, we obtained better codes than the existing codes in the literature. All the computations are performed using Magma Computational Algebra System \cite{magma}.\\

  \textbf{Construction}: \label{construction}
 Let $C$ be a double skew cyclic code of length $(r, s)$ over $\mathrm{R}$. Now, we construct a linear code $C^\prime$ of length $2(r+s)$ using the generator matrix $G^\prime$ given by
\begin{enumerate} \label{C1}
    \item if $r=s$, $G'=\left[\begin{array}{cc}
        G& G \\
        L &H
    \end{array}\right]$,
    \item if $r<s$, $G'=\left[\begin{array}{cc}
        G& G_1 \\
        L &H
    \end{array}\right]$, where $G_1$ is the lengthening of $G$ upto the length $2s$;
    \item if $r>s$, $G'=\left[\begin{array}{cc}
        G& G_2 \\
        L &H
    \end{array}\right]$, where $G_2$ is the first $2s$ columns of matrix $G$,
\end{enumerate}
where $G$ and $[L|H]$ be generator matrices corresponding to the generating polynomials $\langle g(x)\rangle=\langle g_v(x)v+g_{v^\prime}(x)v^\prime \rangle $ and $\langle l(x)=l_v(x)v+l_{v^\prime}(x)v^\prime )  \vert  h(x)=h_v(x)v+h_{v^\prime}(x)v^\prime \rangle$, respectively.


\begin{ex}
Let $\mathbb{F}_{3^3}$ be the field of order $27$ and $t$ be a primitive root of unity in $\mathbb{F}_{3^3}$. Let $\mathcal{R}_r=R[x;\theta_1]/\langle x^6-1\rangle$ and $\mathcal{R}_s=R[x;\theta_1]/\langle x^3-1\rangle$, where $R=\mathbb{F}_{27}[v]/\langle v^2-v \rangle$ and $\theta_1$ is the Frobenius automorphism extended to the ring $R$. Consider the following factorization in $\mathbb{F}_{3^3}[x;\theta_1]$:
\begin{align*}
    x^6-1&=( x^3 + t^4x^2 + x + t^{14})( x^3 + t^{17}x^2 + t^{22}x + t^{25}),\\
    x^6-1&=( x^3 + t^6x^2 + t^{21}x + 2)( x^3 + t^{19}x^2 + t^{21}x + 1),\\
    x^3-1&=(x^5 + t^4x^4 + t^{14}x^3 + x^2 + t^4x + t^{14})(x + t^{25}),\\
    x^3-1&=(x^5 + t^2x^4 + t^{20}x^3 + x^2 + t^2x + t^{20})( x + t^{19}).
\end{align*}
Let  $C=\langle (g_v(x)v+g_{v^\prime}(x)v^\prime \vert 0), ( l_v(x)v+l_{v^\prime}(x)v^\prime \vert  h_v(x)v+h_{v^\prime}(x)v^\prime)  \rangle $ be the double skew cyclic code of length $(6,  3)$ over $\mathrm{R}$ where
\begin{align*}
   & g_v(x)=x^3 + t^{17}x^2 + t^{22}x + t^{25},~ g_{v^\prime}(x) =  x^3 + t^{19}x^2 + t^{21}x + 1\\
    & l_v(x)=x^2 + t^2x + t,~ l_{v^\prime}(x)=x^2 + t^5x + t^2\\
    & h_v(x)=x + t^{25},  h_{v^\prime}(x)= x + t^{19}.
\end{align*}

 Then, using the Gray map, the standard form of the generator matrix of $C$ is

$$G=\left[ \begin{array}{cccccccccccccccccc}
1& 0& 0& 0& 0& 0& 0& 0& 0& 0& t^{10}& t^{22}& t^6& t^{23}& t^3& t^{19}& 0& t^{22}\\
0& 1& 0& 0& 0& 0& 0& 0& 0& 0& t^{11}& t^6& t^{21}& t^7& t^8& t^2& 0& t^8\\
0& 0& 1& 0& 0& 0& 0& 0& 0& 0& 1& t^6& t^{22}& t^{14}& t^2& 0& t^5& t^9\\
0& 0& 0& 1& 0& 0& 0& 0& 0& 0& 2& 1& t^3& t^{15}& 0& t^{23}& t^{11}& t^3\\
0& 0& 0& 0& 1& 0& 0& 0& 0& 0& t^{21}& t^{16}& t^2& t^{10}& t^3& t^{19}& t^8& t^{21}\\
0& 0& 0& 0& 0& 1& 0& 0& 0& 0& t^{11}& t^8& t^{23}& t^{24}& t& t^{20}& t^5& 1\\
0& 0& 0& 0& 0& 0& 1& 0& 0& 0& 2& t^5& t^{25}& t^{20}& t^2& t^6& t^7& t^6\\
0& 0& 0& 0& 0& 0& 0& 1& 0& 0& t^3& 0& t^7& t^{11}& t^4& t& t^{18}& t\\
0& 0& 0& 0& 0& 0& 0& 0& 1& 0& t^{22}& t^{22}& t^{24}& t^3& t^{17}& t^{12}& t^3& t^{24}\\
0& 0& 0& 0& 0& 0& 0& 0& 0& 1& t^{22}& t^7& t^2& 1& t^9& t^{14}& t^{23}& t^2\\
\end{array}\right].$$
Thus, $C$ has optimal parameters $[18, 10, 6]_{27}$. Now, by the construction \ref{construction}, we present a generator matrix of $C^\prime$ as follows
$$G'=\left[ \begin{array}{cccccccccccccccccc}
1& 0& 0& 0& 0& 0& 0& 0& 0& 0& t^{10}& t^{22}& t^5& t^{22}& t^2& t^{17}& 0& t^{22}\\
0& 1& 0& 0& 0& 0& 0& 0& 0& 0& t^{11}& t^6& t^7& t^{17}& 2& t^{25}& 0& t^8\\
0& 0& 1& 0& 0& 0& 0& 0& 0& 0& 1& t^6& t^7& t^2& t^{18}& t^{15}& t^5& t^9\\
0& 0& 0& 1& 0& 0& 0& 0& 0& 0& 2& 1& t^{10}& 1& t^{24}& t^{24}& t^{11}& t^3\\
0& 0& 0& 0& 1& 0& 0& 0& 0& 0& t^{21}& t^{16}& 0& t^7& t^{20}& t^{17}& t^{15}& t^{21}\\
0& 0& 0& 0& 0& 1& 0& 0& 0& 0& t^{11}& t^8& t^5& t^7& t^2& t^{25}& t^5& 2\\
0& 0& 0& 0& 0& 0& 1& 0& 0& 0& 2& t^5& t^{11}& 0& t^{11}& t^9& t^7& t^6\\
0& 0& 0& 0& 0& 0& 0& 1& 0& 0& t^3& 0& t^{14}& t^{12}& t& t^{23}& t^{18}& t\\
0& 0& 0& 0& 0& 0& 0& 0& 1& 0& t^{22}& t^{22}& t^{15}& t^{12}& t^{19}& t^{19}& t^3& t^{24}\\
0& 0& 0& 0& 0& 0& 0& 0& 0& 1& t^{22}& t^7& t^2& t^{18}& t^{17}& t^3& t^{23}& t^2\\
\end{array}\right].$$
Hence, $C^\prime$ has optimal parameters $[18, 10, 7]_{27}$ which is better than the code $[18, 10, 6]_{27}$ given in \cite{A7}.

 \end{ex}

\begin{ex}
Let $\mathbb{F}_{2^4}$ be the field of order $16$ where $t$ is a primitive root of unity in $\mathbb{F}_{2^4}$. Let $R_r=R_s=R[x;\theta_1]/\langle x^8-1\rangle$ where $R=\mathbb{F}_{16}[v]/\langle v^2-v \rangle$ and $\theta_1$ is the Frobenius automorphism extended to the ring $R$. Consider the following factorization in $\mathbb{F}_{16}[x;\theta_1]$
\begin{align*}
    x^8-1&=(x^4 + t^3x^3 + t^7x^2 + t^4x + 1)(x^4 + t^3x^3 + t^2x^2 + t^4x + 1)\\
    x^8-1&=(x^6 + t^9x^5 + t^7x^4 + t^2x^3 + tx^2 + t^6x + t^5)( x^2 + t^6x + t^{10})\\
    x^8-1&=(x^6 + t^8x^5 + t^4x^4 + tx^3 + t^{14}x^2 + t^5x + t^6)( x^2 + t^2x + t^9).
\end{align*}
Let  $C=\langle (vg_v(x)+(1-v)g_{v^\prime}(x)|0),(vl_v(x)+(1-v)l_{v^\prime}(x)|vh_v(x)+(1-v)h_{v^\prime}(x))  \rangle$ be the double skew cyclic code of length $(8,  8)$ where
\begin{align*}
    &g_v(x)=g_{v^\prime}(x) = x^4 + t^3x^3 + t^2x^2 + t^4x + 1,\\
    &l_v(x)=l_{v^\prime}(x)=x^3 + t^7x^2 + t^3x + t,\\
    &h_v(x)=x^2 + t^6x + t^{10},\\
    &h_{v^\prime}(x)=x^2 + t^2x + t^9.
\end{align*}
 Then, using the Gray map, the standard form of the generator matrix of $C$ is:

$$G=\left[\begin{array}{cccccccccccccccccccccc}
     1& 0& 0& 0& 0& 0& 0& t^3& 0& 0& 0& 0& 0& 0& 0& 0& 0& 0& 0& 0& 0& 0\\
0& 1& 0& 0& 0& 0& 0& t^9& 0& 0& 0& 0& 0& 0& 0& 0& 0& 0& 0& 0& 0& 0\\
0& 0& 1& 0& 0& 0& 0& t^6& 0& 0& 0& 0& 0& 0& 0& 0& 0& 0& 0& 0& 0& 0\\
0& 0& 0& 1& 0& 0& 0& 1& 0& 0& 0& 0& 0& 0& 0& 0& 0& 0& 0& 0& 0& 0\\
0& 0& 0& 0& 1& 0& 0& t^3& 0& 0& 0& 0& 0& 0& 0& 0& 0& 0& 0& 0& 0& 0\\
0& 0& 0& 0& 0& 1& 0& t^9& 0& 0& 0& 0& 0& 0& 0& 0& 0& 0& 0& 0& 0& 0\\
0& 0& 0& 0& 0& 0& 1& t^6& 0& 0& 0& 0& 0& 0& 0& 0& 0& 0& 0& 0& 0& 0\\
0& 0& 0& 0& 0& 0& 0& 0& 1& 0& 0& 0& 0& 0& 0& t^3& 0& 0& 0& 0& 0& 0\\
0& 0& 0& 0& 0& 0& 0& 0& 0& 1& 0& 0& 0& 0& 0& t^9& 0& 0& 0& 0& 0& 0\\
0& 0& 0& 0& 0& 0& 0& 0& 0& 0& 1& 0& 0& 0& 0& t^6& 0& 0& 0& 0& 0& 0\\
0& 0& 0& 0& 0& 0& 0& 0& 0& 0& 0& 1& 0& 0& 0& 1& 0& 0& 0& 0& 0& 0\\
0& 0& 0& 0& 0& 0& 0& 0& 0& 0& 0& 0& 1& 0& 0& t^3& 0& 0& 0& 0& 0& 0\\
0& 0& 0& 0& 0& 0& 0& 0& 0& 0& 0& 0& 0& 1& 0& t^9& 0& 0& 0& 0& 0& 0\\
0& 0& 0& 0& 0& 0& 0& 0& 0& 0& 0& 0& 0& 0& 1& t^6& 0& 0& 0& 0& 0& 0\\
0& 0& 0& 0& 0& 0& 0& 0& 0& 0& 0& 0& 0& 0& 0& 0& 1& 0& 0& 0& 0& 0\\
0& 0& 0& 0& 0& 0& 0& 0& 0& 0& 0& 0& 0& 0& 0& 0& 0& 1& 0& 0& 0& 0\\
0& 0& 0& 0& 0& 0& 0& 0& 0& 0& 0& 0& 0& 0& 0& 0& 0& 0& 1& 0& 0& 0\\
0& 0& 0& 0& 0& 0& 0& 0& 0& 0& 0& 0& 0& 0& 0& 0& 0& 0& 0& 1& 0& 0\\
0& 0& 0& 0& 0& 0& 0& 0& 0& 0& 0& 0& 0& 0& 0& 0& 0& 0& 0& 0& 1& 0\\
0& 0& 0& 0& 0& 0& 0& 0& 0& 0& 0& 0& 0& 0& 0& 0& 0& 0& 0& 0& 0& 1\\
\end{array}\right|
$$
$$
\left|\begin{array}{cccccccccc}
      t^4& t& t^{5}& t^3& t^6& t^{4}& t^{6}& t^5 & t^3& t^{2}\\
 t^{11}& 0& t& t^9& t^{11}& t& t& t^{2} & t^7& t^{12}\\
 t^{6}& 1& t^{12}& t^6& t& t^{8}& t^3& t^8 & t^{12}& t^{14}\\
 t& t& t^4& 0& t^{13}& t^{6}& t^{11}& t^{10}& t^{12} & t^7\\
t^{2}& t^8& t& 0& t^{10}& t^3& t^{8} & 0& t^{11}& t^4\\
 t^2& t^{9}& t^9& t^4& t& t^{13}& t^2& t^{12}& t^7& t^{13}\\
 t^5& t^7& t^{12}& t^{11}& t^{11}& t^{10}& t^6& 1& t & t^2\\
 t& t^5& t^8& t^5& t^{11}& t^{8}& t^{5}& t^9 & 1& t^3\\
 1& t^{7}& t^{10}& t^9& t^{11}& t^7& 1 & t& t^2& t^9\\
t^5& t^4& t^9& t& 0& t^{13}& t^{12}& t^{9} & t^2& t^5\\
 t^{4}& t^4& 1& t^7& t& t^{14}& t^8& t & t& t^{3}\\
 1& t^{12}& t^9& t^{2}& t^6& t^{11}& t& 1 & t^{13}& t\\
 t^8& t^{14}& 1& t^7& t^{10}& t^{7}& t^{4} & t^{6}& t^5& t^{10}\\
 t^4& t^{13}& t^{4}& t^{4}& t^5& t^6& t^{13}& t^{4} & t^{3}& t^{2}\\
 t^{3}& t^9& t^4& t^{11}& t& 0& t^3& 1 & t^8& t^{13}\\
 t^{13}& t^{14}& t^{5}& t^5& 1& t^5& t^{7}& t^{7}& t^{12}& t^7\\
 t^{11}& t^{2}& t^{11}& 1& t& t& t^{8}& t^{11} & t^{4}& t^{4}\\
 t^6& t^8& 1& t^6& t^3& 0& t^4& t^7& t^4 & 1\\
t& t& t^{5}& t^{12}& t^{3}& t^{2}& t^{11}& t^{10} & t^6& t^{7}\\
t^{3}& t^{6}& t& t^3& t^8& 0& t^{10}& t& t^{12} & 1\\
\end{array}\right].$$

Then, the parameters of code $C$ are $[32,20,4]_{16}$. Now, by the construction \ref{construction}, the generator matrix of new code $C'$ is given by

$$G'=\left[\begin{array}{cccccccccccccccccccccc}
     1& 0& 0& 0& 0& 0& 0& t^3& 0& 0& 0& 0& 0& 0& 0& 0& 0& 0& 0& 0& 0& 0\\
0& 1& 0& 0& 0& 0& 0& t^9& 0& 0& 0& 0& 0& 0& 0& 0& 0& 0& 0& 0& 0& 0\\
0& 0& 1& 0& 0& 0& 0& t^6& 0& 0& 0& 0& 0& 0& 0& 0& 0& 0& 0& 0& 0& 0\\
0& 0& 0& 1& 0& 0& 0& 1& 0& 0& 0& 0& 0& 0& 0& 0& 0& 0& 0& 0& 0& 0\\
0& 0& 0& 0& 1& 0& 0& t^3& 0& 0& 0& 0& 0& 0& 0& 0& 0& 0& 0& 0& 0& 0\\
0& 0& 0& 0& 0& 1& 0& t^9& 0& 0& 0& 0& 0& 0& 0& 0& 0& 0& 0& 0& 0& 0\\
0& 0& 0& 0& 0& 0& 1& t^6& 0& 0& 0& 0& 0& 0& 0& 0& 0& 0& 0& 0& 0& 0\\
0& 0& 0& 0& 0& 0& 0& 0& 1& 0& 0& 0& 0& 0& 0& t^3& 0& 0& 0& 0& 0& 0\\
0& 0& 0& 0& 0& 0& 0& 0& 0& 1& 0& 0& 0& 0& 0& t^9& 0& 0& 0& 0& 0& 0\\
0& 0& 0& 0& 0& 0& 0& 0& 0& 0& 1& 0& 0& 0& 0& t^6& 0& 0& 0& 0& 0& 0\\
0& 0& 0& 0& 0& 0& 0& 0& 0& 0& 0& 1& 0& 0& 0& 1& 0& 0& 0& 0& 0& 0\\
0& 0& 0& 0& 0& 0& 0& 0& 0& 0& 0& 0& 1& 0& 0& t^3& 0& 0& 0& 0& 0& 0\\
0& 0& 0& 0& 0& 0& 0& 0& 0& 0& 0& 0& 0& 1& 0& t^9& 0& 0& 0& 0& 0& 0\\
0& 0& 0& 0& 0& 0& 0& 0& 0& 0& 0& 0& 0& 0& 1& t^6& 0& 0& 0& 0& 0& 0\\
0& 0& 0& 0& 0& 0& 0& 0& 0& 0& 0& 0& 0& 0& 0& 0& 1& 0& 0& 0& 0& 0\\
0& 0& 0& 0& 0& 0& 0& 0& 0& 0& 0& 0& 0& 0& 0& 0& 0& 1& 0& 0& 0& 0\\
0& 0& 0& 0& 0& 0& 0& 0& 0& 0& 0& 0& 0& 0& 0& 0& 0& 0& 1& 0& 0& 0\\
0& 0& 0& 0& 0& 0& 0& 0& 0& 0& 0& 0& 0& 0& 0& 0& 0& 0& 0& 1& 0& 0\\
0& 0& 0& 0& 0& 0& 0& 0& 0& 0& 0& 0& 0& 0& 0& 0& 0& 0& 0& 0& 1& 0\\
0& 0& 0& 0& 0& 0& 0& 0& 0& 0& 0& 0& 0& 0& 0& 0& 0& 0& 0& 0& 0& 1\\
\end{array}\right|
$$
$$
\left|\begin{array}{cccccccccc}
      t^5& t^8& t^{14}& t^2& 0& t^{10}& t^{11}& t^9 & t^8& t^{14}\\
 t^{11}& t^{11}& t^{12}& 1& t^9& t^9& t^5& t^{10} & t^5& t^{13}\\
 t^{10}& t^7& t^8& t^8& t^4& t^{12}& t^8& t & t^6& t^{14}\\
 t^8& t^{10}& 1& 0& t^{13}& t^{13}& t^9& 1& t^{10} & t^5\\
t^{13}& t^7& t^{10}& t^{12}& t^{10}& t^8& t^{12} & t^5& t^{10}& t^3\\
 t^2& t^{10}& t^5& 1& t^5& t^6& 1& 1& 1& t\\
 t& t& t^{10}& 0& t& t^{10}& t^3& t^7& t^{12} & t^9\\
 t^{14}& t^4& 0& t^6& t^{11}& t^{14}& t^{11}& t^5 & t^{13}& t^2\\
 t^8& t^{12}& t^{10}& t^5& t^{11}& t^7& t^{10} & t^{11}& t^6& t^6\\
t^3& t^5& 1& t^6& t^{13}& t^3& t^9& t^{10} & t^3& t\\
 t^{10}& t^3& 0& t^9& t^4& t^{14}& t^7& t^{10} & t^8& t^{11}\\
 t^{13}& t^5& t^6& t^{12}& 1& t^{13}& t^{12}& t^8 & t& t\\
 1& t^{10}& t^{13}& t^7& t^{11}& t^{13}& t^{14} & t^{10}& t^5& t^6\\
 t^8& t^4& t^{12}& t^{11}& t^9& 0& t& t^{12} & t^{13}& t^{12}\\
 t^{12}& t^8& t^7& t^{14}& t^5& t^4& t^5& t^7 & t^3& t^{14}\\
 t^{14}& t^3& t^{13}& t^6& t^{10}& t& t^{12}& t^{11}& t^{11}& t\\
 t^2& t^{12}& t^8& t^{11}& t^7& t^{11}& t^{13}& t & t^{11}& t^{14}\\
 t^2& t^2& t^{13}& t^9& 0& t^{10}& t& t^2& t & t^7\\
t^{14}& t^{14}& t^{14}& t^4& 1& 1& t^8& 1& t^{10} & t^3\\
t^9& t^3& t^{10}& t^{12}& t^{12}& t^{14}& t^6& t^8 & t^6& t^{14}\\
\end{array}\right]$$

Hence, $C'$ has the parameters $[32, 20, 8]_{16}$ which is better than the code $[32, 16, 8]_{16}$ given in \cite{A7}.

\end{ex}
\newpage
  \renewcommand{\arraystretch}{1.4}
\begin{table}
\label{comparison}
\caption{Optimal $\mathrm{R}$-double skew cyclic codes}
\begin{tabular}{|c|c|c|}
	\hline
$(r,s,p^m)$  &$g_v(x),g_{v^\prime}(x), l_v(x),l_{v^\prime}(x),h_v(x), h_{v^\prime}(x)$&  $\Phi(\mathcal{C})$\\
\hline

 $(6,6,2^4)$&$x^4 + x^2 + 1, x^4 + t^{10}x^2 + t^5, x^3 + t^8x^2 + tx + t^7, $&$[24, 12, 10]$\\
&$x^3 + t^6x^2 + t^4x + t^{11}, x^2 + x + 1, x^2 + t^5x + 1$&\\
	\hline

 $(6,6,2^4)$&$ x^3 + t^5x^2 + t^{10}x + t^{10}, x^3 + t^5x^2 + t^{10}x + t^{10},$&$[24, 14, 6]$\\
&$  x^2 + tx + t^3, x^2 + tx + t^3, x^2 + t^{10}, x^2 + t^{10}x + 1$&\\
	\hline

  $(6,3,3^3)$&$ x^3 + t^{17}x^2 + t^{22}x + t^{25}, x^3 + t^{19}x^2 + t^{21}x + 1,$&$[18, 10, 7]^*$\\
&$ x^2 + t^2x + t, x^2 + t^5x + t^2, x + t^{25}, x + t^{19} $&\\
	\hline

  $(6,3,3^3)$&$x^2 + t^{21}x + t^{25}, x^2 + t^{22}x + t^{11}, x + t^2,x + t^5, x + t^{23}, x + 2$&$[18, 12, 5]$\\
  \hline

  $(6,6,3^2)$&$x^2 + x + 1, x^2 + t^3x + t^6, x + 1, x + t,$&$[24, 16, 6]$\\
&$x^2 + t^3x + t^2, x^2 + t^6x + 1$&\\
	\hline
   $(6,6,2^4)$&$x^2 + 1, x^2 + t^5, x + t, x + t^2, x^2 + t^5, x^2 + t^5x + 1$&$[24, 16, 6]$\\
	\hline
   $(8,8,2^4)$&$x^4 + t^3x^3 + t^2x^2 + t^4x + 1, x^4 + t^3x^3 + t^2x^2 + t^4x + 1, $&$[32, 20, 8]^*$\\
&$x^3 + t^7x^2 + t^3x + t, x^3 + t^7x^2 + t^3x + t, x^2 + t^6x + t^{10}, x^2 + t^2x + t^9$&\\
	\hline
 $(8,8,2^4)$&$x^3 + t^5x^2 + 1, x^3 + t^5x^2 + 1, x^2 + t^3x + t,$&$[32, 22, 6]$\\
&$ x^2 + t^3x + t, x^2 + 1, x^2 + t^{10}x + t^3 $&\\
	\hline
   $(8,8,2^4)$&$x^2 + 1, x^2 + t^2x + 1, x + 1, x + t, x^2 + t^2x + t^5,  x^2 + t^{13}x + t^2$&$[32, 24, 6]$\\
	\hline

$(12,12,2^2)$&$x^3 + 1, x^3 + x^2 + x + 1, x^2 + tx + t^2, x^2 + tx + 1, $&$[48, 36, 6]$\\
&$x^3 + tx^2 + t^2x + 1, x^3 + t^2x^2 + t^2x + t$&\\
	\hline
 $(12,12,3^2)$&$ x^2 + t^7x + t^5, x^2 + t^6x + t^2, x + t, x + 1,$&$[48, 40, 6]$\\
&$ x^2 + t^5x + t^2, x^2 + tx + t^7$&\\
	\hline
  $(12,12,2^4)$&$ x^3 + t^2x^2 + t^{11}x + 1, x^3 + t^3x^2 + t^9x + t^4, x^2 + tx + t^3
, x^2 + t^7x + t^3,$&$[48, 36, 7]$\\
&$ x^3 + t^3x^2 + t^9x + t^4, x^3 + t^9x^2 + t^{13}x + t^4$&\\
	
	\hline
  $(14,14,2^2)$&$x^3 + x^2 + 1, x^3 + t^2x^2 + tx + t^2, x^2 + x + 1,$&$[56, 44, 6]$\\
&$ x^2 + x + t, x^3 + t^2x^2 + t^2x + t^2, x^3 + t^2x^2 + t^2 $&\\
	\hline

  $(24,24,2^2)$&$ x^2 + 1, x^2 + t^2x + 1, x + 1, x + t, x^2 + t^2, x^2 + tx + 1$&$[96, 88, 4]$\\

	\hline
  $(42,42,3^2)$&$x^2 + x + 1, x^2 + 2x + 1, x + t, x + t^2, x^2 + tx + t^2,$&$[168, 160, 4]$\\
&$ x^2 + t^3x + t^6$&\\
	\hline
\end{tabular}
\end{table}

$^*$: better parameters than $[32, 16, 8]_{16}$ and $[18, 10, 6]_{27}$ respectively given in \cite{A7}.

\newpage
\section{Conclusion}
This paper has studied the algebraic structure of double skew cyclic codes of block length $(r,
 s)$ over a finite non-chain ring $\mathrm{R}$. We determined the generators of $\mathrm{R}$-double skew cyclic codes and their duals. Furthermore, we defined a new generator matrix to obtain new codes with better parameters. To support our results, we showed good examples and created a table of some optimal code over the ring $\mathrm{R}$.
 The present work also directs future works in two ways: These codes may be studied over another mixed alphabet or extend this study over a generalized structure and look for the construction of quantum codes as the applications. The former path seems more straightforward than the latter.

\section{Acknowledgment} In this research, the author Ashutosh Singh is thankful to the Council of Scientific \& Industrial Research (CSIR), Govt. of India under File No. 09/1023(0027)/2019- EMR-1 and the author Tulay Yildirim is thankful to the Scientific and Technological Research Council of Turkey (TUBITAK) ARDEB 1002-A Grant No 123F286 for the financial support.

\end{document}